\documentclass{llncs}
\usepackage{amssymb}
\let\doendproof\endproof
\renewcommand\endproof{~\hfill\qed\doendproof}



\def\nicefrac{\frac}

\pagestyle{headings}
\begin{document}

\title{Universal Factor Graphs
}

\author{Uriel Feige
and
Shlomo Jozeph}

\institute{
Department of Computer Science and Applied Mathematics,\\
The Weizmann Institute of Science, Rehovot, Israel\\
\email{\{uriel.feige,shlomo.jozeph\}@weizmann.ac.il}
\\ \bigskip \today
}

\maketitle

\begin{abstract}
The factor graph of an instance of a symmetric constraint satisfaction problem on $n$ Boolean variables and $m$ constraints (CSPs such as k-SAT, k-AND, k-LIN) is a bipartite graph describing which variables appear in which constraints. The factor graph describes the instance up to the polarity of the variables, and hence there are up to $2^{km}$ instances of the CSP that share the same factor graph. It is well known that factor graphs with certain structural properties make the underlying CSP easier to either solve exactly (e.g., for tree structures) or approximately (e.g., for planar structures). We are interested in the following question: is there a factor graph for which if one can solve every instance of the CSP with this particular factor graph, then one can solve every instance of the CSP regardless of the factor graph (and similarly, for approximation)? We call such a factor graph {\em universal}. As one needs different factor graphs for different values of $n$ and $m$, this gives rise to the notion of a family of universal factor graphs.

We initiate a systematic study of universal factor graphs, and present some results for max-$k$SAT. Our work has connections with the notion of preprocessing as previously studied for closest codeword and closest lattice-vector problems, with proofs for the PCP theorem, and with tests for the long code. Many questions remain open.
\end{abstract}

\section{Introduction}

A constraint satisfaction problem (CSP) has a set of $n$ variables
and a set of $m$ constraints (also referred to as clauses, or factors).
Every constraint involves a subset of the variables, and is satisfied
by some assignments to the variables and not satisfied by others.
An instance of a CSP is satisfiable if there is an assignment to the
variables that satisfies all constraints. When variables are Boolean
and constraints are symmetric a constraint is fully specified by the
set of literals that it contains (where a literal is either a variable
or its negation), and is satisfied if and only if the appropriate
number of literals is set to true (e.g., at least one for SAT, an
odd number for XOR, all for AND, the majority for MAJ, and at least
one but not all for NAE). To simplify the presentation, we shall consider
in this paper CSPs that are Boolean and symmetric, though we remark
that much of what we discuss can be extended to non-Boolean and non-symmetric
CSPs.

The \emph{factor graph} of an instance of a CSP is a bipartite graph.
Vertices on one side represent the variables, vertices on the other
side represent the constraints (also known as \emph{factors}), and
edges connect constraints to the variables that they contain. For
Boolean symmetric CSPs, a factor graph together with a labeling of
the edges with $\pm1$ (indicating whether the corresponding variable
has positive or negative polarity in the underlying clause) completely
specifies an instance of the CSP. Without the edge labels, there are
many instances of the CSP that share the same factor graph and differ
only in the polarity of the variables.

As is well known, deciding satisfiability for CSPs is NP-hard for
a large class of predicates (including, SAT, MAJ and NAE). See \cite{S78}
for a complete classification. Here we shall consider NP-hard CSPs.
The research question that motivates our current paper is to understand
what are the obstacles for obtaining efficient algorithms for solving
CSPs. Specifically, are algorithms having trouble in {}``understanding''
the structure of the factor graph, and this translates to difficulties
in solving the underlying CSP? Alternatively, are the computational
difficulties a result of the combinatorial richness of the polarities?

The structure of the factor graph may cause the underlying CSP instance
to be easy. For example, if the factor graph is a tree (or more generally,
of bounded treewidth), then the underlying CSP instance can be solved
in polynomial time (by dynamic programming). Our research question
(once properly formalized) can be viewed as asking whether in other
cases, the structure of the factor graph might be the major contributing
factor to making a CSP hard.

The playing field of our research agenda is greatly enriched once
optimization versions of CSPs are considered, namely max-CSP: find
an assignment to the variables that satisfies as many constraints
as possible. As is well known, even some polynomial time solvable
CSPs (such as XOR, or 2SAT) become NP-hard when their optimization
version is considered. See \cite{CKS01} for a classification. A standard
way of dealing with NP-hard max-CSP instances is via approximation
algorithms that in polynomial time find an assignment that is guaranteed
to satisfy a number of constraints that is at least $\rho$ times
the maximum number of constraints that can be satisfied, for some
$0<\rho<1$. For many CSPs, the best possible $\rho$ is known, in
the sense that the approximation ratios provided by known approximation
algorithms are matched by hardness of approximation results that show
that better approximation ratios would imply that P=NP. For example,
$\rho=7/8$ is a tight approximation threshold for max-3SAT \cite{H97}.
Moreover, for all CSPs, an algorithm (based on semidefinite programming)
with the optimal approximation ratio is given by Raghavendra \cite{R08}, assuming
the \emph{Unique Games Conjecture} of Khot \cite{K02}. However, despite
the optimality of this algorithm, it is difficult to figure out which
approximation ratio it guarantees, and consequently there are CSPs
for which the value of this threshold is not known. (And of course,
if the Unique Games Conjecture is false then the approximation ratio
implied by this algorithm need not be tight.)

Our research agenda naturally extends to max-CSP. One may ask whether
approximation algorithms are having trouble in {}``understanding''
the structure of the factor graph, and whether this translates to
difficulties in approximating the underlying CSP. Moreover, now the
question acquires also a quantitative aspect, and one may ask to what
extent does the factor graph contribute to the approximation difficulty.
For example, if algorithms had no difficulty in {}``understanding''
factor graphs, could the approximation ratio for max-3SAT be improved
from $7/8$ to $8/9$?

As in the case of tree factor graphs for decision versions, there
are known families of factor graphs (such as planar graphs, or more
generally, families of graphs excluding a fixed minor) on which the
underlying CSP instance has improved approximation ratios, or even
a PTAS ($\rho>1-\epsilon$ for every $\epsilon>0$). On the other
hand, it appears that for some CSPs, almost every factor graph is
difficult. For example, there is no known approximation algorithm
that runs in polynomial time on random 3CNF formulas (with say $m=n\log n$
constraints) and approximates max-3SAT within a ratio better than~7/8.
This suggests (though does not prove) that there is no need for clever
design of the factor graph in order to make the underlying CSP instance
difficult -- almost any factor graph would do.

In contrast, for unique games (which is a special family of CSPs with
two non-Boolean variables per constraint), the approximation ratios
achievable on random factor graphs \cite{AKKSTV08} are much better
than those currently known to be achievable on arbitrary factor graphs.
(Technically, the graphs considered by Arora et al.~\cite{AKKSTV08} have variables
as vertices and constraints as edges, but there is a one-to-one correspondence
between such graphs and factor graphs.) The same holds for some other
classes of graphs \cite{T05,K10}. Can we (and should we) identify
more factor graphs on which unique games are easy? Is there a {}``universal''
graph (e.g., a generalized Kneser constraint graph?) such that if
unique games are easy on it, then the Unique Games Conjecture is false?
Such questions lead naturally to the notion that we call here \emph{universal
factor graphs}.

\subsection{Preprocessing}

How can we provide evidence that algorithms for max-3SAT should be spending substantial time in analyzing the factor graph? Here is a possible formal approach. Reveal the input instance in two stages. In the first stage, only the factor graph is revealed. At this point the algorithm is allowed to run for arbitrary time and record (in polynomial space) whatever information about the factor graph that it may hope to find useful (e.g., an optimal tree decomposition of the factor graph, or a minimum dominating set in the factor graph, both of which are pieces of information that take exponential time to compute). Thereafter the polarities of the variables are revealed. At this stage the algorithm has only polynomial time, and it needs to find an optimal solution to the max-3SAT instance. If there is a combination of algorithms (unbounded time for stage~1, polynomial time for stage~2) that can do this on every instance, this establishes that a good understanding of the factor graph suffices for solving 3SAT instances. If this cannot be done, this establishes that at least some substantial portion of the running time is a result of the combinatorial richness of space of possibilities for polarities of the variables. Refined versions of the preprocessing approach either require less of the stage~2 algorithm (finding nearly optimal solutions rather than optimal ones) or give it extra power (allow subexponential time), and may lead to a more quantitative understanding of the value of preprocessing.

To derive positive results in this model, it suffices to provide the respective algorithms and their analysis. But how does one provide negative results? This is where the notion of {\em universal factor graphs} comes in. Informally, these are factor graphs on which preprocessing is unlikely to help, because if it does, then all instances (regardless of their factor graph) can be solved even without preprocessing.

\subsection{Universal Factor Graphs}

We consider infinite families of factor graphs. Basically, for every value of $N,M > 0$, a family includes at most one factor graph with $N$ variables and $M$ constraints. However, for convenience in intended future uses, members of the family are indexed by two auxiliary indices that are called $n$ and $m$. Definition~\ref{def:family} does not exclude the possibility that several factor graphs in the family share the same values of $N$ and $M$, but their number is upper bounded by some polynomial in $N + M$.

\begin{definition}
\label{def:family}
Consider an arbitrary CSP with $k$ variables per-constraint.
For integers $n > 0$ and $0 < m \le 2^k{n \choose k}$, let $N(n,m)$ and $M(n,m)$ be two functions, each lower bounded by $n$ and upper bounded by a polynomial in $n + m$. A {\em family of factor graphs} associates with each pair of values of $n$ and $m$ a factor graph with $N(n,m)$ variables and $M(n,m)$ constraints. The family is {\em uniform} if there is an algorithm running in time polynomial in $n + m$ that given $n,m$ produces the associated factor graph.
\end{definition}

Every member of a family of factor graphs for a $k$-CSP can give rise to $2^{kM}$ instances of the CSP, depending on how one sets the polarities of the variables in the constraints. Given any such instance as input, we shall consider computational tasks such as {\em satisfiability} (find a satisfying assignment if one exists), {\em optimization} (find an assignment satisfying as many clauses as possible) and {\em approximation} (get close to optimal).

The algorithms that perform the above tasks will be limited in their running times. In this work, we shall be interested in two classes of running times. One is the standard {\em polynomial time} (P) notion, which in our case will mean polynomial in $(N + M)$. The other is {\em subexponential time}, (SUBEXP) which in this paper is taken to mean time time $2^{O(N^{1 - \epsilon})}$ for some $\epsilon > 0$.

Recall that in computational complexity theory, one distinguishes between uniform
models of computation (such as Turing machines) and non-uniform models
(such as families of circuits). This
distinction is relevant in our context. The notion of preprocessing the factor graph can be captured by allowing for nonuniform algorithms. Hence we shall be dealing with the complexity classes P/poly, SUBEXP/poly and SUBEXP/subexp (the parameters /poly and /subexp correspond to the length of advice that the preprocessing stage is allowed to record). For simplicity in our presentation, in each of our definitions below we shall specify one particular complexity class (either P/poly or SUBEXP/poly), but we note that our results extend to other complexity classes as well (such as P instead of P/poly, or SUBEXP/subexp instead of SUBEXP/poly).

In this work we will show that for some uniform families of factor graphs solving satisfiability or approximation tasks are hard. These families of factor graphs will be referred to as {\em universal}, and with slight abuse of terminology, individual factor graphs within these families will be referred to as {\em universal factor graphs}. The hardness results will be proved under some complexity assumption. If the complexity assumption is widely believed, such as that NP is not contained in P/poly, then the universal factor graphs support the view that the complexity of the underlying CSP cannot be attributed entirely to the factor graph and is at least partly due to the polarities of the variables, because the nonuniform algorithms could preprocess the factor graph for arbitrary time prior to receiving the polarities of the variables. If the complexity assumption is not as widely believed (such as the Unique Games Conjecture), the interpretation of these hardness result can be that if one wishes to refute the complexity assumption, it would suffice to design algorithms that are specifically tailored to work on instances with factor graphs as in the universal family.

We now present formal definitions that are tailored to match those results that we can prove in this paper. It is straightforward to adapt these definitions to other variations as well.

\begin{definition}
\label{def:P-universal}
For a given CSP, a uniform family of factor graphs is P-universal if there is no P/poly algorithm for instances of the CSP with factor graphs from this family, unless NP is contained in P/poly.
\end{definition}

\begin{definition}
\label{def:subexp-universal}
For a given CSP, a uniform family of factor graphs is subexp-universal if there is no SUBEXP/poly algorithm for instances of the CSP with factor graphs from this family, unless there is a SUBEXP/poly algorithm for all instances of the CSP.
\end{definition}

\begin{definition}
\label{def:threshold-universal}
For a given CSP and $0 < \rho < 1$, a uniform family of factor graphs is $\rho$-universal if there is no P/poly approximation algorithm with approximation ratio better than $\rho$ on the instances of the CSP with factor graphs from this family, unless NP is contained in P/poly. This notion is referred to as {\em threshold-universal}. If $\rho$ is equal to the best approximation ratio known for the underlying CSP, we will refer to this as a {\em tight} threshold. When we do not wish to specify a particular value for $\rho$, we call the family {\em APX-universal}. A variation on $\rho$-universality is $(c,s)$-universality with $0 < s < c \le 1$, where instead of approximation within a ratio of $\rho$, one considers distinguishing between instances with at least a $c$-fraction of the clauses being satisfiable, and instances with at most $s$-fraction being satisfiable. For a CSP for which the decision
variant is NP-hard (e.g. 3SAT), $\rho$-universality will be taken to mean $\left(1,\rho\right)$-universal.
\end{definition}

More generally, for optimization versions we shall allow vertices
(representing constraints) of universal factor graphs to have nonnegative
weights, thus representing instances in which one wishes to find an
assignment that maximizes the weight (rather than the number) of satisfied
constraints. As the weights will be fixed (independently of the subsequent
polarities given to variables), this is in essence a condensed representation
of an unweighted universal factor graph (which can be obtained by
duplicating each vertex a number of times proportional to its weight,
rounded to the nearest integer -- details omitted).

\subsection{Some Research Goals}

The notion of universal factor graphs opens up many research directions that we find interesting.
In our current work we attempt to answer questions such as: Does 3SAT have P-universal factor graphs? Subexp-universal factor graphs? Does max-3SAT have APX-universal factor graphs? Does max-3SAT have
$7/8$-universal factor graphs?
These questions are part of a wider research agenda that concerns
questions such as: Do all CSPs have tight threshold-universal factor graphs? Which CSPs do not have tight threshold-universal factor graphs?
Other questions of interest include: How do universal factor graphs look like? Can knowledge of their structure help us either in designing new algorithms, or in reductions that
prove new hardness results?

\subsection{Related Work\label{sec:related}}

There has been work showing that CSPs on particular factor graphs
are NP-hard, and using such results to help in reductions establishing
further NP-hardness results. For example, it is known that 3SAT is
NP-hard even when the factor graph is planar \cite{L82}, and this
was used (for example) in showing that \emph{minimum-length rectangular
partitioning of a rectilinear polygon} (with holes) is NP-hard \cite{LPRS82}.
Our notion of universal factor graphs is stronger as it requires at
most one particular factor graph for each instance size, rather than
a whole family of factor graphs (e.g., the $n$ by $n$ grid, rather
than all planar graphs).

A line of work that closely relates to our research agenda is that
of preprocessing for NP-hard problems. As the universal factor graph
is fixed, one may consider preprocessing it for arbitrary (exponential)
time in order to produce a polynomial size {}``advice'', prior to
getting the polarities of the variables. Preprocessing was extensively
studied for some NP-hard problems, and hardness results in the context
of preprocessing amount to designing instances that are universal
(in our terminology). Naor and Bruck \cite{BN90} show that the nearest
code word problem remains NP-hard even when the code can be preprocessed.
Nearest lattice vector (CVP) when the lattice can be preprocessed
was shown to be NP-hard and APX-hard by Feige and Micciancio \cite{FM02}. The tightest hardness
results for lattice problems with preprocessing currently known are by Khot et al.~\cite{KPV}. An earlier work by Alekhnovich et al.~\cite{AKKV05} has some partial overlap with our current work, because it uses PCP theory and in the process gives hardness of approximation
results with preprocessing for additional problems. See more details in Section~\ref{sec:threshold}.

The above results on coding and lattice problems with preprocessing
are motivated by the fact that in these problems, it is indeed often
the case that part of the input is fixed in advance (the code, or
a basis for the lattice), and part of the input (a noisy word that
one wishes to decode, or vector for which one wishes to find the closest
lattice point) is a query that is received only later. Moreover, multiple
queries are expected to be received on the same fixed input. In these
cases it really makes sense to invest much time in preprocessing the
fixed part of the input, if this later helps answering the multiple
queries more quickly. In contrast, our notion of universal factor graphs is independent of such practical concerns. Our motivation is to understand the source of difficulties in solving NP-hard problems. In particular, it is irrelevant to us whether there really is any real life situation in which one receives the factor graph of a 3CNF formula in advance, and then is asked a sequence of queries about it, each time with different polarities of the variables.

Is it at all plausible that preprocessing can help? For lattice problems,
this indeed appears to be the case. There are no known approximation
algorithms with subexponential ratios for CVP, but if preprocessing
is allowed, than polynomial approximation ratios are known (by using
an exponential time preprocessing procedure that derives a so called
\emph{reduced basis} of the lattice). For CSPs, the authors are aware
of only much weaker evidence that preprocessing may help. This relates
to the case that polarities of variables are random rather than arbitrary.



There is a refutation algorithm that is poly-time on random 3CNF formulas
with more than $n^{1.5}$ clauses. The obstacle to extending this
to lower density of $n^{1.4}$ is graph-theoretic: if one knew how
to efficiently find certain substructures in the factor graphs (that
almost surely exist), this would suffice \cite{FKO06}. Preprocessing
the factor graph would allow finding these structures. Hence at these densities, random
factor graphs are not expected to be universal (with respect to \emph{random}
polarities).


In the current paper we consider arbitrary polarities for the variables
rather than random polarities. Nevertheless, we remark that the case
of random polarities is also well motivated, and related to possible
cryptographic application. See \cite{ABW10} as an example showing
how results from \cite{FKO06} can be used in a proposal of new public
key cryptographic primitives.

More generally, cryptography offers many examples where preprocessing is believed to help (it will lead to the discovery of a so called {\em trapdoor} that would make solving future instances easy), but as this typically relates to computational problems that are believed not to be NP-hard, further discussion of this is omitted from the current manuscript.

\subsection{Our Results}

The first theorem is based on a straightforward reduction and we have no doubt that it was previously known (perhaps using different terminology).

\begin{theorem}\label{thm:polytime} There are P-universal factor
graphs for 3SAT. \end{theorem}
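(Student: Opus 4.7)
The plan is to leverage the Cook--Levin reduction applied to a single fixed universal nondeterministic Turing machine $M$. For every input length $\ell$ and time bound $T=T(\ell)$, the Cook--Levin construction produces a 3-CNF formula $\phi_{\ell,T}$ on $\mathrm{poly}(\ell,T)$ variables and clauses, expressing ``there is a computation of $M$ of length at most $T$ that accepts the string written in the first $\ell$ input cells.'' The clauses split into two groups: \emph{structural} clauses (tape consistency, transition relation, acceptance), which depend only on $M$, $\ell$ and $T$; and \emph{input} clauses, one unit clause per input bit that pins the corresponding tape-cell variable to a specific polarity. The key observation is that the factor graph, i.e.\ the incidence structure between variables and clauses, is fully determined by $M$, $\ell$, $T$; the input $x$ affects only the polarities of the $\ell$ input-setting clauses.

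Next I would make this a genuine 3-CNF whose factor graph contains no polarity information. I pad each input unit clause $(\ell_i)$ to a 3-clause $(\ell_i \vee a \vee b)$ where $a,b$ are auxiliary variables clamped to \textsc{false} by additional fixed-polarity 3-clauses; these padded clauses occupy fixed slots in the factor graph. The uniform family $\{G_{n,m}\}$ is then defined by taking the factor graph of $\phi_{n,T(n,m)}$ for an appropriate polynomial $T$, with dummy variables and trivial clauses appended to match the target sizes $N(n,m)$ and $M(n,m)$; the algorithm producing $G_{n,m}$ from $n,m$ runs in polynomial time, satisfying Definition~\ref{def:family}.

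To establish universality, let $L\in\mathrm{NP}$ be an arbitrary language with verifier $V_L$ of runtime $p(\cdot)$. Given input $x$, choose $\ell$ slightly larger than $|\langle V_L\rangle|+|x|+p(|x|)$ and $T=\mathrm{poly}(|x|)$ large enough for $M$ to simulate $V_L$ with witness $w$. Set the first $|\langle V_L\rangle|+|x|$ input-setting polarities to encode $\langle V_L\rangle$ followed by $x$, and leave the remaining polarities (the witness slots) according to any fixed default. This yields, in polynomial time, a polarity labeling of the universal factor graph whose resulting 3SAT instance is satisfiable iff $x\in L$. Consequently, a $\mathrm{P}/\mathrm{poly}$ algorithm for 3SAT on instances whose factor graph belongs to $\{G_{n,m}\}$ would immediately give a $\mathrm{P}/\mathrm{poly}$ algorithm for $L$, so the family is P-universal unless $\mathrm{NP}\subseteq\mathrm{P}/\mathrm{poly}$.

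The only delicate point is bookkeeping: checking that the padding really places each input-setting clause in the same factor-graph slot for every possible $x$, and that the mapping $(n,m)\mapsto G_{n,m}$ meets the uniformity condition of Definition~\ref{def:family}. Everything else is standard Cook--Levin accounting, which is why the theorem is essentially folklore.
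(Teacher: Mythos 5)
Your proposal is correct in spirit and establishes the theorem, but it takes a noticeably different route from the paper. The paper's proof is a direct, self-contained gadget construction that makes no reference to Turing machines: for $3$SAT on $n$ variables it enumerates all $n^3$ ordered triples of variables, introduces a fresh ``switch'' variable $z_i$ per triple $T_i=(x_1,x_2,x_3)$, and adds the two clauses $(x_1\vee x_2\vee z_i)$ and $(x_3\vee z_i\vee z_i)$. An arbitrary $3$CNF formula $f$ on $n$ variables is then embedded purely by choosing polarities: if no clause with tuple $T_i$ occurs in $f$, all occurrences of $z_i$ are made positive (so $z_i=\mathrm{true}$ trivially discharges both clauses); if a clause with tuple $T_i$ does occur, the polarities of $x_1,x_2,x_3$ are copied over and $z_i$ is negated in the first clause, so that the pair of clauses becomes equivalent to the original clause. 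This gives a fixed factor graph with $N=n+n^3$ variables and $M=2n^3$ clauses. Your construction instead routes through Cook--Levin applied to a fixed universal NTM, observing that the tableau's incidence structure is determined by $(\ell,T)$ while the input affects only the polarities of the input-pinning clauses. Both are polynomial-size reductions; the paper's is more elementary and gives an explicit, simple universal factor graph, while yours is the natural ``folklore'' observation that the Cook--Levin tableau is already polarity-oblivious and generalizes more readily to other computation models.

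One point in your write-up needs care: you say to ``leave the remaining polarities (the witness slots) according to any fixed default.'' If those tape cells really hold the nondeterministic witness and you \emph{pin} them by input-setting clauses (even to a default), the resulting formula tests whether $V_L$ accepts $(x,w_{\mathrm{default}})$ for that one fixed $w_{\mathrm{default}}$, not whether $x\in L$. The clean fix, which you may already have in mind, is to let the universal machine $M$ be nondeterministic and guess the witness internally (so the nondeterminism lives in the transition-relation clauses of the tableau, not in pinned input cells); then there are no witness slots among the input-setting clauses, and any cells past $\langle V_L\rangle\#x$ are genuine blank padding. With that clarification your argument goes through.
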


For the P-universal factor graphs constructed by our proof for Theorem~\ref{thm:polytime}, an algorithm running in time $2^{N^{1 - \epsilon}}$ on instances of the universal family would correspond to time $2^{n^{3 - 3\epsilon}}$ on general instances. Hence they are not subexp-universal. The next theorem addresses this issue.

\begin{theorem}\label{thm:subexp} There are subexp-universal factor
graphs for 3SAT. \end{theorem}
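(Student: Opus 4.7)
\medskip\noindent\emph{Proof proposal.} The plan is to produce a reduction from general 3SAT with $n$ variables and $m$ clauses to an instance of 3SAT whose factor graph is drawn from a uniform family of size $N(n,m) = (n+m)^{1+o(1)}$. Since this blowup is near-linear, any $2^{O(N^{1-\epsilon})}$-time algorithm on the family yields a $2^{O((n+m)^{1-\epsilon'})}$-time algorithm on general 3SAT, which is subexponential in the input size.

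The core ingredient is a near-linear-blowup version of the Cook--Levin reduction, obtained by combining the oblivious two-tape Turing machine simulation of Pippenger and Fischer with the standard tableau-style encoding of a computation as a 3CNF formula. For any fixed nondeterministic Turing machine $V$ and any time bound $T$, this produces in polynomial time a 3CNF formula $\Psi$ with $\tilde O(T)$ variables and clauses, satisfiable iff $V$ has an accepting run on some input, and whose factor graph depends only on $V$ and $T$; moreover $\Psi$ has distinguished input variables $y_1,\ldots,y_\ell$ intended to hold the bits of the input to $V$.

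I would apply this to the canonical 3SAT verifier $V$ that, on input a natural binary encoding of a 3SAT formula $\phi$ with $n$ variables and $m$ clauses (of length $\ell = O(m\log n)$), nondeterministically guesses an assignment $\alpha\in\{0,1\}^n$ and verifies each of the $m$ clauses against $\alpha$. This $V$ runs in nondeterministic time $T = \tilde O(n+m)$ and produces a formula $\Psi_{n,m}$ of size $\tilde O(n+m)$. Define $G_{n,m}$ to be the factor graph of $\Psi_{n,m}$, augmented for each input variable $y_i$ by a small pinning gadget whose polarities force $y_i$ to any desired bit (for instance, a ternary clause with all three edges going to $y_i$, replaced by a constant-size arrangement on a couple of fresh auxiliary variables if one prefers to avoid multi-edges). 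The family $\{G_{n,m}\}$ is uniform with $N(n,m), M(n,m) = \tilde O(n+m)$.

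Given any 3SAT instance $\phi$, the reduction produces an instance on $G_{n,m}$ by assigning to the clauses inherited from $\Psi_{n,m}$ the fixed, $\phi$-independent polarities prescribed by the Cook--Levin reduction, and assigning to the pinning gadget for $y_i$ polarities that match the $i$-th bit of the encoding of $\phi$. By construction the resulting instance is satisfiable iff $\phi$ is, and composing with any $2^{O(N^{1-\epsilon})}$-time algorithm on the family yields an algorithm for 3SAT running in time $2^{O((n+m)^{1-\epsilon'})}$; by Definition~\ref{def:subexp-universal} the family is subexp-universal. The principal technical hurdle is the near-linear Cook--Levin step: while its ingredients are classical, one must verify carefully that the blowup is truly $(n+m)^{1+o(1)}$ rather than a fixed polynomial, and that the factor graph of $\Psi_{n,m}$ depends only on $(n,m)$, with all of the dependence on $\phi$ routed through the polarities of the pinning gadgets.
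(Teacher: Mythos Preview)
Your near-linear oblivious Cook--Levin reduction (via Pippenger--Fischer) is a legitimate alternative to the paper's device for the same subgoal: the paper's Lemma~\ref{lem:polylog} builds an explicit nondeterministic verification circuit around the AKS sorting network to obtain a universal factor graph with $N=O(m\log m\log n)$ variables. Either technique yields a factor graph of size $(n+m)^{1+o(1)}$ depending only on $(n,m)$, with the input formula entering only through pinning polarities.

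The genuine gap is that you are measuring subexponential time in the wrong parameter. The paper defines SUBEXP as $2^{O(N^{1-\epsilon})}$ with $N$ the number of \emph{variables}; the remark immediately following Theorem~\ref{thm:polytime} makes this explicit, declaring that a $2^{N^{1-\epsilon}}$ algorithm on a family with $N\approx n^3$ yields only $2^{n^{3-3\epsilon}}$ on general $n$-variable 3SAT and is therefore \emph{not} subexp-universal. Your family has $N=(n+m)^{1+o(1)}$, and since a 3CNF on $n$ variables may have $m=\Theta(n^3)$ clauses, a $2^{O(N^{1-\delta})}$ algorithm on your family yields only $2^{O(n^{3(1-\delta')})}$ on dense instances. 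This is not $2^{O(n^{1-\epsilon})}$ unless $\delta>2/3$, whereas the definition must cover arbitrarily small $\delta$.

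The paper closes this gap with a second ingredient you omitted: a sparsification step (Lemma~\ref{lem:IPZ}, a variant of Impagliazzo--Paturi--Zane with $\epsilon=\epsilon(n)$). It replaces an arbitrary $n$-variable 3CNF, in time $2^{O(n^{1-\epsilon})}$, by at most $2^{n^{1-\epsilon}}$ formulas on $n$ variables each having only $O(n^{1+2\epsilon}(\log n)^2)$ clauses, equisatisfiable as a disjunction. Only after sparsifying do they apply the near-linear reduction, obtaining $N=n^{1+2\epsilon}\,\mathrm{polylog}(n)$; choosing $\epsilon=\delta/3$ then gives $N^{1-\delta}<n^{1-\epsilon}$, so the total running time is $2^{O(n^{1-\epsilon})}$. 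Your argument can be repaired by prepending exactly this sparsification.
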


We would have liked to prove that there are $7/8$-universal factor
graphs for max-3SAT, matching the tight threshold of approximability
for max-3SAT. However, we only managed to prove weaker bounds.

\begin{theorem}\label{thm:threshold} There are $77/80$-universal factor
graphs for max-3SAT. \end{theorem}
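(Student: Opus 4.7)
The plan is to combine a hardness-of-approximation result for max-3SAT (H{\aa}stad's $7/8$ bound) with the universalization construction from Theorem~\ref{thm:polytime}. The $77/80$ ratio will arise from carefully tuning the amount of overhead introduced by the universalization.

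The construction has two layers. In the inner layer, I would take the universal factor graph from Theorem~\ref{thm:polytime} (suitably adapted) and use polarities to encode an arbitrary max-3SAT instance $\phi$. Such an encoding produces a factor graph $G_{n,m}$ with two kinds of clauses: \emph{payload} clauses, whose polarities directly reproduce $\phi$'s clauses on dedicated input variables, and \emph{structural} clauses, which use auxiliary variables to enforce that the input variables actually represent $\phi$. In the outer layer I would plug in, as $\phi$, the hard gap instance from H{\aa}stad's theorem, so that $\phi$ is either satisfiable or has max value at most $(7/8 + \epsilon)m$. In the good case, the canonical assignment (H{\aa}stad's satisfying assignment extended to auxiliary variables) satisfies every clause of $G_{n,m}$. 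In the bad case, at most a $7/8$ fraction of the payload clauses can be satisfied, while every structural clause still is.

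The key calibration is to engineer the construction so that payload clauses form exactly a $3/10$ fraction of all clauses, with structural clauses forming the remaining $7/10$. Then in the bad case the best possible satisfaction ratio is $(3/10)(7/8) + 7/10 = 77/80$, yielding $77/80$-universality (with the $\epsilon$ absorbed). This can be arranged by duplicating payload or structural clauses as needed (or by adjusting the selector gadget size) so the two classes have the right proportion.

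The main obstacle is proving the bad-case upper bound rigorously: one must argue that any deviation from the canonical auxiliary assignment loses at least as many structural clauses as it gains payload clauses. In other words, the selector gadgets must be made \emph{rigid}, so that even small local changes to the auxiliary variables are penalized sufficiently. One could try to achieve this by taking many repeated copies of each structural clause (so a single-bit deviation in an auxiliary variable is amplified into many unsatisfied clauses), or by embedding the selector in an error-correcting-code-like structure so that any deviation corrupts many check clauses. Once this rigidity is established, the trade-off analysis above becomes tight and yields the claimed $77/80$ bound.
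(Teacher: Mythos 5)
Your approach is fundamentally different from the paper's, and it contains a gap that is in fact the entire content of the theorem. You propose to embed H{\aa}stad's hard gap instance into the P-universal factor graph of Theorem~\ref{thm:polytime}, and then argue that in the bad case ``at most a $7/8$ fraction of the payload clauses can be satisfied, while every structural clause still is.'' That last clause is exactly what fails. The universal construction introduces auxiliary variables (the $z_i$'s in Theorem~\ref{thm:polytime}) precisely so that a near-optimal assignment can trade unsatisfied payload clauses for structural ones: setting a $z_i$ ``wrong'' satisfies one of the two structural clauses for $T_i$ while effectively deleting the literal from the payload clause, and there is no mechanism preventing this. Worse, the construction pads the instance with $\Theta(n^3)$ clauses for unused tuples while the original formula has only $m$ clauses, so a $\frac{1}{8}$ gap in $\phi$ collapses to a gap of order $\frac{m}{n^3}$, not a constant. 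Duplicating structural clauses or ``error-correcting-code-like'' wiring, mentioned only as a wish, is in effect a request for a gap-amplifying factor-graph-preserving PCP — which is the actual hard step and is what Theorems~\ref{thm:APX} and~\ref{thm:threshold1} establish through a Dinur-style amplification made factor-graph-preserving, followed by a BGS-style inner verifier with a new ``oblivious folding'' of the long code.

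Moreover, the constant $77/80$ is not obtained by your $\frac{3}{10}\cdot\frac{7}{8}+\frac{7}{10}$ bookkeeping with H{\aa}stad's $7/8$; it arises from optimizing the Bellare--Goldreich--Sudan three-query long-code test (the $\frac{3}{80}$ failure fraction in Theorem~\ref{BestGap3SAT}). Starting from H{\aa}stad's $7/8$ is also ruled out here: his proof uses \emph{conditioning} of the long code, for which the paper explicitly reports no oblivious analogue is known, and that obstruction is why the result stops at $77/80$ rather than $7/8$. So the numerical coincidence between your calibration and the true ratio is accidental, and the proposed route — H{\aa}stad plus Theorem~\ref{thm:polytime} plus clause-count tuning — does not lead to a proof.
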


Is there any CSP for which we can obtain tight threshold-universal families? We do not know, but we do have almost tight results.

\begin{theorem}
\label{thm:tight}
For every $\epsilon > 0$ there is an integer
$k$ for which there is a family
of factor graphs that are $\left(1-\left(1-\epsilon\right)2^{-k}\right)$-universal for max-E$k$SAT.
\end{theorem}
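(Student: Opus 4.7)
The plan is to compose a universal family of Label Cover instances with H\aa stad's long code test for max-E$k$SAT, ensuring at every step that the factor graph of the final max-E$k$SAT instance depends only on the factor graph of the underlying 3SAT instance (obtained from Theorem~\ref{thm:subexp}), with all dependence on the 3SAT polarities concentrated in the polarities of the resulting E$k$SAT clauses.

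The first step is to convert the subexp-universal 3SAT family into a uniform family of Label Cover instances whose bipartite adjacency and per-vertex label alphabet depend only on the 3SAT factor graph. I would arrange this by composing (i) a size-preserving Cook-Levin reduction into an intermediate NP-complete problem, (ii) a randomness-efficient two-query PCP whose verifier's query positions depend only on its random string, and (iii) Raz's parallel repetition. A step-by-step check certifies that no wiring of variables into constraints is ever a function of a 3SAT polarity; the polarities surface only in the projection $\pi_e$ attached to each Label Cover edge.

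The second step is to apply H\aa stad's long code test. In its standard form, the $k-1$ positions queried on the right vertex depend on $\pi_e$, which would destroy universality. I would re-parameterize the random seed of the test so that the queried $k$-tuple of long code entries becomes a function of the seed alone (uniform and $\pi_e$-independent), and the dependence on $\pi_e$ is pushed into the polarities of the E$k$SAT clause. H\aa stad's biased auxiliary noise can be realized through the weighted-factor-graph formulation allowed earlier in the paper, with weights determined by the noise distribution, which is a globally fixed and hence $\pi_e$-independent choice. Because the re-parameterization is a bijective change of variables over the test's native seed, H\aa stad's Fourier-analytic completeness and soundness analysis carries through verbatim, yielding a max-E$k$SAT instance with gap $\left(1-\epsilon, 1-(1-\epsilon)2^{-k}\right)$ for a suitable $k = k(\epsilon)$.

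The main obstacle is verifying that the re-parameterization, together with long code folding, genuinely decouples the queried positions from $\pi_e$ without altering the underlying test distribution. In the native test the $\pi_e$ dependence enters the positions through a coordinate-wise multiplication with the noise variable, and the folding convention introduces additional sign flips that are themselves $\pi_e$-dependent. Tracking all of these couplings and routing each into either a clause polarity or a $\pi_e$-independent weight requires some bookkeeping, but introduces no new analytic ingredient beyond H\aa stad's original argument.
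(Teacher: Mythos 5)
Your proposal takes a completely different route from the paper's, and unfortunately the key step is precisely the open problem the paper identifies and sidesteps. You propose to push through H\aa stad's long code test after re-parameterizing the random seed so that queried positions become $\pi_e$-independent, claiming that ``H\aa stad's Fourier-analytic completeness and soundness analysis carries through verbatim.'' This is exactly what the authors say they could not do. The obstruction is not the re-parameterization of positions but the \emph{conditioning} operation: H\aa stad's soundness analysis relies on conditioning the long code over the outer-verifier constraint, which restricts attention to the equivalence classes of columns determined by that constraint. Conditioning is not a polarity-oblivious operation --- the equivalence classes themselves depend on the constraint, not just on a sign. The paper introduces \emph{oblivious folding} as a weaker substitute, but explicitly notes (Section~\ref{sec:TUFG} and the Discussion) that oblivious folding does not capture all equivalences captured by conditioning: when the soundness argument decodes to a linear combination of long codes, folding over each equality constraint guarantees only that \emph{some} long code in the combination satisfies that one constraint, not that any single long code satisfies all of them simultaneously. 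This is why the paper only achieves $77/80$-universality for max-3SAT (via Bellare--Goldreich--Sudan, which uses folding rather than conditioning) and explicitly leaves $7/8$-universality for max-3SAT open. Your claim that the bijective change of variables preserves the analysis is the missing (and, per the authors, unresolved) ingredient.

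The paper's actual proof of Theorem~\ref{thm:tight} is far more elementary and deliberately avoids this issue. It starts from the $(1-\gamma)$-universal family for max-E3SAT already obtained in Theorem~\ref{thm:threshold} (so only constant APX-hardness is needed, not tight hardness), and applies a direct gadget FGPR from E3SAT to E$k$SAT: introduce $q = k-3$ fresh variables $y_1,\dots,y_q$ and three variables $z_1,z_2,z_3$; form $\psi_0$ by padding every clause of $\phi_3$ with the negated $y$'s; form, for each of the $2^q - 1$ nonzero polarity patterns of the $y$'s, a block $\psi_i$ of eight E$k$-clauses over $y,z$ covering all eight $z$-polarities; and take a weighted mixture giving $\psi_0$ weight $1/(8\gamma)$. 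Completeness is immediate (set all $y$'s true); if $\phi_3$ is at most $(1-\gamma)$-satisfiable then whichever $y$-pattern is chosen forces a weight-$1/8$ loss, either in $\psi_0$ (if all $y$'s are true) or in the matching $\psi_i$. Taking $q$ large enough so that $2^q \ge \frac{1-\epsilon}{\epsilon}\bigl(\frac{1}{8\gamma}-1\bigr)$ gives the $\left(1-(1-\epsilon)2^{-k}\right)$ bound. The whole argument is a page of elementary bookkeeping, uses no Fourier analysis, and shows that mere APX-hardness becomes nearly tight for E$k$SAT as $k$ grows --- which is precisely why the quantifier order is ``for every $\epsilon$ there exists $k$'' rather than the tight version.
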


Theorem~\ref{thm:tight} in nearly tight because every instance of max-E$k$SAT is $(1 - 2^{-k})$-satisfiable, and consequently there are several algorithms with a $(1 - 2^{-k})$ approximation ratio. To actually get tight results we would need to switch the order of quantifiers in Theorem~\ref{thm:tight} (show that for some $k$ the result holds for every $\epsilon$), but doing so remains an open question.

Using the techniques developed in our work and known reductions among CSPs one can obtain APX-universal factor graphs for additional CSPs. In particular, we derive APX-universal factor graphs for max-2LIN, thus illustrating that for approximating unique games (max-2LIN is a unique game) at least part of the difficulty comes from the polarities of variables rather than from the structure of the factor graph. See Appendix~\ref{app:MoreCSPs}.

\section{Overview of proofs}
\label{sec:proofs}

At a high level, to show that a factor graph is universal, one shows
that any other factor graph (of the appropriate size) can be reduced
to it. The details of how this is done depend on the context.

The proof of Theorem \ref{thm:polytime} appears in  Appendix~\ref{sec:poly}. It is elementary
and can serve as an introduction to some of the more complicated proofs that follow.

\subsection{Subexp-Universal Families\label{sub:SubexpUFG}}

Our proof of Theorem~\ref{thm:subexp} combines two ingredients. One
is a variation on a result of Impagliazzo et al.~\cite{IPZ01} (see Lemma~\ref{lem:IPZ} in Appendix~\ref{app:SubexpUFG}). It can be leveraged to show that for the
purpose of constructing subexp-universal factor graphs it suffices
to consider 3CNF instances with a linear number of clauses.

The other ingredient is a reduction with a tighter connection between
$n+m$ and $N$ compared to the one used in our proof of Theorem \ref{thm:polytime}.
\begin{lemma}
\label{lem:polylog} There is a factor graph with $N=O(m\log m\log n)$
variables that is P-universal with respect to 3SAT instances
with $n$ variables and $m$ clauses.
\end{lemma}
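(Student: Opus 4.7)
The plan is to build a fixed 3CNF factor graph that implements a \emph{universal 3CNF interpreter}: the polarities of its edges encode an arbitrary 3CNF formula $\phi$ with $n$ variables and $m$ clauses, and the resulting instance is satisfiable iff $\phi$ is. The key quantitative ingredient is a single shared routing network, based on a sorting network, that performs all $3m$ variable look-ups at once, so that no individual look-up costs $\Omega(n)$.

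I reserve $n$ ``real'' variables $x_1,\ldots,x_n$ and $L=O(m\log n)$ ``control'' variables $c_1,\ldots,c_L$, intended to hold the $3m$ variable-index fields of $\phi$ (each of $\log n$ bits). The value of each $c_t$ is hard-wired by a dedicated unit clause $(c_t)$ whose polarity encodes the desired bit. I then build a circuit whose $3m$ output wires are $y_{j,k}=x_{i_{j,k}}$, where $i_{j,k}\in[n]$ is the variable index read from the appropriate block of $c$'s. Rather than using $3m$ separate $n$-input multiplexers, I perform all look-ups by a single sort: form a combined list of $n$ source records $(i,x_i)$ and $3m$ query records $(i_{j,k},\bot)$ tagged with $(j,k)$; sort the list by key (the index) using an $O(N\log N)$-size sorting network such as AKS on $N=n+3m$ items; do a linear pass propagating each source value onto the adjacent query records; then sort again by the original position so that each $y_{j,k}$ emerges at a fixed location tied to $(j,k)$. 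Each comparator handles $O(\log n)$-bit keys and swaps $O(\log m)$-bit tags, costing $O(\log n)$ gates, so the whole routing circuit has size $O((n+m)\log(n+m)\log n) = O(m\log m\log n)$.

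For each clause $j$ I then include a single 3-clause among $y_{j,1},y_{j,2},y_{j,3}$ in the universal factor graph; the polarities of its three edges are set directly to the signs of the literals of clause $j$ of $\phi$, so no additional control bits are needed to encode signs. I convert the routing circuit and the gadgets defining the $y_{j,k}$'s into 3CNF by the standard Tseitin transformation, which preserves size up to constant factors. The resulting factor graph depends only on $(n,m)$. Given any $\phi$, setting the unit-clause polarities so that the $c_t$'s encode its variable indices and setting the polarities of the clause-check 3-clauses to its literal signs (all other polarities being canonical) yields an instance that is satisfiable iff $\phi$ is, which establishes the claimed P-universality of the family.

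The principal obstacle is the routing step. A one-multiplexer-per-look-up construction is trivial but gives size $\Theta(mn)$, which is far too large. The resolution is to amortize all $3m$ look-ups through a single sorting network on $n+3m$ items, reducing the per-look-up cost to roughly $O(\log m \cdot \log n)$ on average. Achieving the sharper $\log m\log n$ (rather than $\log^2 m\log n$) factor needed for the stated bound forces the use of an $O(N\log N)$-size sorter such as AKS; a simpler sorter like Batcher's bitonic network would yield only $O(m\log^2 m\log n)$. A secondary technical point is to confine all $\phi$-dependence of the polarities to the unit-clauses on the $c_t$'s and to the clause-check 3-clauses, which is automatic because the rest of the construction is canonical.
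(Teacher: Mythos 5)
Your approach is correct, but it takes a genuinely different route from the paper's, even though both rest on the same quantitative engine (an $O(N\log N)$-size AKS sorting network, yielding $O(\log n)$ gates per comparator). The paper builds a \emph{nondeterministic} circuit: for each clause it guesses one of the three literal slots (2 bits per clause) and uses a single sort of just the $m$ selected literals, followed by an adjacent-pair scan, to certify that no variable is selected both positively and negatively; the entire E3CNF description of $\phi$ (indices and signs) is fed in as circuit inputs clamped by unit clauses, and Tseitin gives the final 3CNF. You instead build a \emph{deterministic routing circuit} that actually fetches $y_{j,k}=x_{i_{j,k}}$ by a sort-scan-sort on $n+3m$ records; the literal signs of $\phi$ never enter the circuit but are expressed directly as the polarities of $m$ ordinary 3-clauses on the $y$-wires, while only the index fields are clamped by unit clauses. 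Your version makes the satisfaction test structurally transparent at the CNF level, whereas the paper keeps everything inside a Boolean circuit and verifies a witness rather than evaluating the formula. Both yield a fixed factor graph whose polarity labeling ranges over all $(n,m)$-formulas.

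One small point worth flagging: your sorter operates on $n+3m$ records, so the size estimate $O\bigl((n+m)\log(n+m)\log n\bigr)=O(m\log m\log n)$ requires $n=O(m)$ (which is harmless here --- variables appearing in no clause may be discarded, and in the intended application via Lemma~\ref{lem:IPZ} we have $m\ge n$). The paper's construction sorts only $m$ items, so its $O(m\log m\log n)$ bound holds without that normalization. You should also make explicit that in the combined sort key you break ties so that source records precede query records with the same index; otherwise the single forward scan does not correctly propagate values.
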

Our proof of Lemma~\ref{lem:polylog} makes use of oblivious sorting
networks (specifically, the one of Ajtai et al.~\cite{AKS83}).

More details on those two ingredients and how they are combined to
prove Theorem \ref{thm:subexp} appear in appendix \ref{app:SubexpUFG}.


\subsection{Threshold-Universal Families}
\label{sec:threshold}


For our proof of Theorem \ref{thm:threshold} we use a notion that we call a \emph{factor
graph preserving reduction} (FGPR). It is an algorithm that transforms
a \emph{source} 3CNF instance $f_{s}$ to a \emph{target} 3CNF instance
$f_{t}$. The transformation has the following properties:

\begin{enumerate}
\item Polynomiality. The transformation algorithm runs in polynomial time
(in the size of $f_{s}$). Consequently, the size of $f_{t}$ is polynomial
in the size of $f_{s}$.
\item Faithfulness. If $f_{s}$ is satisfiable, so is $f_{t}$, and vice versa.
\item Factor graph preserving. Any two instances $f_{s}$ and $f'_{s}$
with the same factor graph are reduced to two instances $f_{t}$ and
$f'_{t}$ that have the same factor graph.
\end{enumerate}

To be useful for our purposes, we would like the FGPR to also have a {\em gap amplification} aspect. Namely, if $f_s$ is not satisfiable, then the fraction of clauses satisfiable in $f_t$ is smaller than the fraction of clauses satisfiable in $f_s$.

Theorem \ref{thm:threshold} will be broken into two sub-theorems, each of which is proved using FGPRs.

\begin{theorem}\label{thm:APX} There are APX-universal factor graphs
for max-3SAT. \end{theorem}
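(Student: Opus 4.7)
The plan is to combine the P-universal family of factor graphs for 3SAT supplied by Theorem~\ref{thm:polytime} with a factor graph preserving reduction (FGPR) that additionally amplifies the gap to a constant. Concretely, I would seek an FGPR $R$ from 3SAT to 3SAT with the following gap-amplification property: there is an absolute constant $s < 1$ such that if $f_s$ is satisfiable then $f_t = R(f_s)$ is satisfiable, while if $f_s$ is unsatisfiable then no assignment satisfies more than an $s$-fraction of the clauses of $f_t$. Because $R$ is factor graph preserving, the factor graph of $f_t$ depends only on the factor graph of $f_s$; hence the collection $\mathcal{F}_{\text{APX}} = \{ F(R(f_s)) : F(f_s) \in \mathcal{F}_{3\text{SAT}} \}$, where $\mathcal{F}_{3\text{SAT}}$ is the P-universal family from Theorem~\ref{thm:polytime}, is itself a uniform family of factor graphs.

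I would then argue that $\mathcal{F}_{\text{APX}}$ is $(1,s)$-universal, hence APX-universal, by a direct reduction. Suppose for contradiction that some P/poly algorithm $A$ approximates max-3SAT to within a ratio strictly better than $s$ on every instance whose factor graph lies in $\mathcal{F}_{\text{APX}}$. Given an arbitrary 3SAT instance $f_s$ whose factor graph lies in $\mathcal{F}_{3\text{SAT}}$, run $R$ in polynomial time to obtain $f_t$, whose factor graph lies in $\mathcal{F}_{\text{APX}}$, and then run $A$ on $f_t$. The value returned by $A$ exceeds $s$ if and only if $f_s$ was satisfiable, by the gap-amplification property and faithfulness of $R$. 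Since deciding satisfiability on $\mathcal{F}_{3\text{SAT}}$ is NP-hard (P-universality), this places NP in P/poly, a contradiction.

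The heart of the argument, and the step I expect to be the main obstacle, is exhibiting the FGPR $R$ with a constant gap. The natural candidate is to take a PCP-based reduction from 3SAT to gap-3SAT, for example one of the preprocessing PCPs underlying \cite{AKKV05}, or a Dinur-style gap-amplification construction, and to verify that every intermediate step is \emph{polarity-oblivious}, i.e., the combinatorial object it produces is determined solely by the clause-variable incidence of its input and not by which literals are positive or negative. Many ingredients of such reductions, such as expander replacement, powering of a constraint graph, variable duplication, and the insertion of equality gadgets, are naturally oblivious because they operate on the incidence structure rather than on the constraints themselves. The delicate part is the composition with an inner verifier (or, in a long-code based construction, the inner tester), where clauses are re-encoded: the code used for each output clause must be chosen from the combinatorial position of the input clause alone, with the polarities entering only through a consistent relabeling of the fresh variables. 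Once obliviousness has been checked at every step, the resulting $R$ is factor graph preserving, the composition described above goes through, and $\mathcal{F}_{\text{APX}}$ is the desired APX-universal family.
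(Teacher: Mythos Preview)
Your proposal is correct and follows essentially the same approach as the paper: start from the P-universal family of Theorem~\ref{thm:polytime} and compose with a Dinur-style gap-amplifying FGPR (iterated $O(\log n)$ times), verifying obliviousness at each stage. The paper's specific solution to the ``delicate part'' you flag is to encode every constraint as a short list of degree-two polynomials over $\mathbb{F}_2$ so that polarity changes affect only the constant terms (two instances with the same factor graph become ``close'' constraint graphs), and then to use a quadratic-code-based alphabet reduction rather than the long code; with this encoding the inner verifier's query locations depend only on the homogeneous parts of the polynomials, and polarities resurface exactly as sign flips on fresh output literals, just as you anticipated.
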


\begin{theorem}\label{thm:threshold1} There is a reduction from
APX-universal factor graphs for max-3SAT to $77/80$-universal ones.
\end{theorem}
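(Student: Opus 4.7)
The plan is to build a factor-graph preserving reduction (FGPR) $R$ from max-3SAT to max-3SAT with a gap amplification property: for some constant $\rho_0 < 1$, if the source $f_s$ is satisfiable then $R(f_s)$ is satisfiable, while if at most a $\rho_0$-fraction of the clauses of $f_s$ are simultaneously satisfiable then at most a $77/80$-fraction of the clauses of $R(f_s)$ are simultaneously satisfiable. Property (3) of FGPR ensures that the factor graph of $R(f_s)$ depends only on the factor graph of $f_s$, so applying $R$ to each member of a given APX-universal family $\mathcal{F}$ (with some unspecified soundness $\rho_0 < 1$) produces a family $R(\mathcal{F})$ of factor graphs. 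This family is then $77/80$-universal: a hypothetical P/poly approximation algorithm doing better than $77/80$ on $R(\mathcal{F})$, composed with $R$, would distinguish satisfiable from $\rho_0$-satisfiable instances on $\mathcal{F}$, contradicting APX-universality.

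To build $R$, I would use local gadgets with fresh per-clause auxiliary variables. For each source clause $C_i = \ell_{i,1} \vee \ell_{i,2} \vee \ell_{i,3}$, introduce a bounded number of fresh auxiliary variables and replace $C_i$ by a constant number of 3-clauses over the literals $\ell_{i,j}$ (with their polarities unchanged) and the auxiliaries. The gadget is designed so that any assignment satisfying $C_i$ extends to an assignment of the auxiliaries that satisfies all gadget clauses, while any assignment falsifying $C_i$ forces a fixed number of gadget clauses to be unsatisfied no matter how the auxiliaries are set. Because the gadget's structure depends only on the variable identities and polarities appearing in $C_i$ (which are precisely the edge-labeled neighborhood of $C_i$ in the factor graph), the resulting reduction is factor-graph preserving by construction. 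To achieve a fixed target ratio $77/80$ that is independent of the starting $\rho_0$, the construction must go beyond a single application of a local gadget: either iterate the gadget a bounded number of times, or compose clauses along a global structure (e.g., an expander walk) that is itself determined by the source factor graph.

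The main obstacle is the quantitative soundness analysis. A single local gadget merely rescales the source gap, giving a target soundness of the form $1 - c(1-\rho_0)$, which tends to $1$ as $\rho_0 \to 1$ and is therefore not a fixed constant. Making the amplification output a uniform $77/80$ regardless of $\rho_0$ requires carefully chaining or composing amplification steps in a way that remains polarity-oblivious at every step, which constrains the combinatorial design. Pinning the final ratio to exactly $77/80$, rather than some weaker constant, is a delicate analysis and is the quantitative bottleneck of the proof; pushing the analysis all the way to Håstad's tight threshold $7/8$ via an FGPR is the natural open direction hinted at by the gap between Theorem~\ref{thm:threshold} and the $7/8$ hardness for unrestricted max-3SAT.
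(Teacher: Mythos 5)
Your high-level framing is correct: you want an FGPR $R$ that is a gap amplifier in the sense that satisfiability is preserved while a $\rho_0$-versus-$1$ gap on the source is pushed to a $77/80$-versus-$1$ gap on the target, and you correctly observe that composing any P/poly better-than-$77/80$ approximation algorithm on $R(\mathcal F)$ with $R$ would refute APX-universality of $\mathcal F$. But your proposed mechanism cannot deliver the theorem, and you essentially say so yourself when you write that pinning the ratio to $77/80$ is ``the quantitative bottleneck'' you have not resolved. That bottleneck is the whole theorem.

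The constant $77/80$ is not a number you can reach by ``local gadgets with fresh per-clause auxiliaries'' or by Dinur-style expander-walk composition. Local gadgets only rescale the gap, as you note, and expander-walk products give some unspecified constant soundness (that is precisely what the paper uses to get the APX-universal \emph{starting point}, Theorem~\ref{thm:APX}). The specific number $77/80 = 1 - 3/80$ comes from the long-code-based inner verifier of Bellare, Goldreich, and Sudan: a 2-prover outer game built on the bounded-rank, bounded-degree constraint hypergraph, parallel-repeated via Raz, followed by an inner verifier that reads (folded) long-code tables and runs a linearity test, a respect-of-monomials test, and a projection/consistency test, each emitting four 3CNF clauses. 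The Fourier-analytic soundness of those three tests, optimized over the mixing probabilities $n_1,n_2,n_3$, is where $3/80$ appears. Your proposal never engages with the long code at all.

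More importantly, you miss the single idea that makes this an FGPR rather than a straight replay of BGS98: \emph{oblivious folding}. Standard folding over the predicate $h$ that a clause imposes pairs $z$-variables by the rule $A_{f+h}=A_f+b$, so the identity of the surviving representatives, and hence the query pattern and factor graph, depends on $h$. Flipping the polarity of one $x$-variable changes $h$ and therefore changes the factor graph, breaking property (3) of an FGPR. The paper's fix is to introduce fresh shadow variables $y_{j,1},y_{j,2},y_{j,3}$ per clause, a polarity-free shadow constraint $y_{j,1}\vee y_{j,2}\vee y_{j,3}$, and equality constraints $y_{j,\ell}=x_{\ell}$; a polarity flip then only negates an equality constraint, which folding can absorb as a change in the constant $b$ (flipping some query answers) without changing which $z$-coordinates are identified. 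This is the lemma your proposal would need and does not have; without it, any long-code-based inner verifier fails to be factor-graph preserving, and any gadget-based amplifier fails to hit $77/80$.
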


The proof of Theorem \ref{thm:APX} strongly relates to the work of
Alekhnovich et al.~\cite{AKKV05}. As explained in Section \ref{sec:related}, in that
work various APX-hardness results with preprocessing were obtained.
Among them, there were APX-hardness results with preprocessing for
certain CSPs (satisfying quadratic equations). It is not difficult
to use these results in order to obtain APX-universal factor graphs
for max-3SAT. However, we present an alternative proof because~\cite{AKKV05} claims the relevant theorem without providing a proof\footnote{Quoting from~\cite{AKKV05}: ``The proof of this theorem, which is a laborious and an almost
exact mimic of the proof of the PCP Theorem, is beyond
the scope of this version of the paper." A subsequent paper~\cite{KPV} that extends~\cite{AKKV05} no longer uses this theorem, and hence does not contain the proof either.}.
Our proof is patterned after a proof of the PCP theorem due to Dinur~\cite{D07}.

Recall that Dinur's proof is based on a sequence of gap amplification steps. However, some of these transformations are not factor graph preserving. Our proof performs a sequence of gap amplifying FGPRs,
starting with the outcome of Theorem \ref{thm:polytime}, and eventually proving Theorem \ref{thm:APX}.
Every FGPR is based on modifying Dinur's proof (or more exactly, on modifying a variation
on Dinur's proof that is given in~\cite{RS07}). The modifications are related to those discussed below for the long code (though our proof for Theorem~\ref{thm:APX} uses a quadratic code rather than the long code).

The proof of Theorem~\ref{thm:threshold1} involves an FGPR from
APX-universal factor graphs for max-3SAT to $77/80$-universal ones.
Our proof is based on a modification of the proof of Bellare et al.~\cite{BGS98}, and consequently obtains the same hardness ratio of $77/80$. The main difficulty we encounter is the following. Tight or nearly tight hardness of approximation results use the so called {\em long code}. A major reason why it is used is that its high redundancy allows one to replace explicit queries that check whether an underlying predicate is satisfied by an implicit operation (referred to as {\em folding}) that allows one to avoid making these queries. The only queries that need to be made are those that check whether the encoding is really (close to) a long code. The saving in queries translates to stronger hardness of approximation results. The problem with folding is that it is sensitive to the predicate that needs to be checked, and a change in the predicate (e.g., changing the polarity of a single variable in a 3SAT clause) changes the folding. As a result, query locations change, and the resulting reduction is not an FGPR. To overcome this problem we introduce a notion of {\em oblivious folding} of the long code, which does allow us to eventually obtain an FGPR. We remark that it was not a-priori obvious that a construct such as oblivious folding should exist at all. In particular, tight hardness of approximation results for 3SAT by Hastad~\cite{H97} use a notion related to folding but somewhat stronger, that is called {\em conditioning} of the long code. We were unable to find an ``oblivious" version of conditioning that can replace the conditioning used by Hastad, and consequently we do not know if $7/8$-universal factor graphs for 3SAT exist.

For the
full proofs of Theorems~\ref{thm:APX} and~\ref{thm:threshold1}, see appendices~\ref{app:APXUFG} and~\ref{sec:TUFG}.

\subsection{Threshold-Universal Families with Nearly Tight Bounds}


Recall that the prefix E (for {\em exact}) in E$k$SAT indicates that every clause in the CNF formula contains exactly $k$ literals (rather than at most) and no two literals in a clause correspond to the same variable. It is not difficult to see that the proof of Theorem~\ref{thm:threshold} in fact gives E3CNF formulas, and not just 3CNF formulas (and even if not, there are simple FPGRs from max-3SAT to max-E3SAT, with only a bounded loss in the approximation ratio). Our proof of Theorem \ref{thm:tight} is based on a direct reduction from instances of max-E3SAT to instances of max-E$k$SAT. This reduction
has the property that mere APX-hardness of max-E3SAT suffices in order
to get nearly tight hardness of approximation ratios for the resulting
max-E$k$SAT instances, if $k$ is sufficiently large.

\begin{proof}
Theorem \ref{thm:threshold} implies that there is a $\left(1-\gamma\right)$-universal
family of factor graphs for E3-CNF formulas, for some $0<\gamma<\nicefrac{1}{8}$.
We shall use this in an FGPR to prove Theorem~\ref{thm:tight}. For simplicity
of the presentation we shall describe our reduction as a reduction
from a single E3-CNF formula $\phi_{3}$ to a single E$k$-CNF formula
$\phi_{k}$. As the factor graph resulting for $\phi_{k}$
will be independent of polarities of variables in $\phi_{3}$, this
will be an FGPR.

Let $\phi_{3}$ be an E3-CNF formula with $n$ variables and $m$
clauses for which one wants to distinguish between the case that it
is satisfiable and the case that it is at most $\left(1-\gamma\right)$-satisfiable.
Formula $\phi_{k}$ will be obtained from a combination of $2^{q}$
auxiliary E$k$-CNF formulas called $\psi_{i}$, for $0\le i\le2^{q}-1$.
Let $q=k-3$. Introduce $q$ fresh variables $y_{1},\ldots,y_{q}$,
and 3 fresh variables $z_{1},z_{2},z_{3}$. Formula $\psi_{0}$ is
obtained from $\phi_{3}$ by adding the $y$ variables (all in negative
polarity) to each clause of $\phi_{3}$. As to the other formulas
indexed by $i\ge1$, each such formula $\psi_{i}$ has eight clauses,
where each clause contains the variables $y_{1},\ldots y_{q},z_{1},z_{2},z_{3}$.
Excluding the all negative polarity combination, there are $2^{q}-1$
remaining combinations of polarities for the $q$ variables of type
$y$. Each such combination of polarities will be associated with
the clauses of one $\psi_{i}$ for $i\ge1$. One may think of the
binary representation of $i$ as specifying the polarity of the $y$
variables in clauses of $\psi_{i}$, where if the $j$'th bit of $i$
is $0$ then $y_{j}$ is negative, and if the $j$'th bit of $i$
is $1$ then $y_{j}$ is positive. As to the $z$ variables, there
are 8 possible combinations of polarities. Within a formula $\psi_{i}$
there are 8 clauses, and each of them has a different combination
of polarities for the $z$ variables.

The formula $\phi_{k}$ will be a weighted mixture of the $\psi_{i}$
(see 
Appendix~\ref{sec:tight} regarding an unweighted version).
Formula $\psi_{0}$ is taken with weight $\frac{1}{8\gamma}$ (which
is larger than 1 because $\gamma<\nicefrac{1}{8}$), spreading this
weight equally among its $m$ clauses. Each of the other $\psi_{i}$
is taken with weight 1, spreading the weight equally among its 8 clauses.
The total weight of $\phi_{k}$ is $2^{q}-1+\frac{1}{8\gamma}$.

If $\phi_{3}$ is satisfiable, so is $\phi_{k}$: an assignment to
the original variables of $\phi_{3}$ that satisfies $\phi_{3}$ also
satisfies $\psi_{0}$, and assigning {\em true} to all $y$ variables satisfies
all $\psi_{i}$ for $i\ge1$. If $\phi_{3}$ is only $1-\gamma$ satisfiable
then the weight of unsatisfied clauses in $\phi_{k}$ is at least
$\nicefrac{1}{8}$: if all variables $y$ are assigned {\em true}, this results
from $\psi_{0}$, and in all other cases, this results from one of
the other $\psi_{i}$.

The total weight of $\phi_{k}$ is $W=2^{q}-1+\frac{1}{8\gamma}$,
and for $q$ satisfying $2^{q}\ge\frac{1-\epsilon}{\epsilon}(\frac{1}{8\gamma}-1)$
we have that $W\le\frac{2^{q}}{\left(1-\epsilon\right)}$ which implies
that $\nicefrac{1}{8}\ge\frac{W\left(1-\epsilon\right)}{2^{k}}$.
Hence $\phi_{k}$ is at most $\left(1-\frac{\left(1-\epsilon\right)}{2^{k}}\right)$-satisfiable,
as desired.\end{proof}

\subsection*{Acknowledgements}

Work supported in part by The Israel Science Foundation (grant No. 873/08).


\appendix

\section{Polytime-Universal Families}
\label{sec:poly}

Proof of Theorem \ref{thm:polytime}:

\begin{proof}
We design a universal factor graph for 3CNF formulas that have $n$
variables and any number of clauses. For simplicity of presentation,
we use the following convention. A clause is a tuple of three variables
that need not be distinct, and the polarities of the variables. Two
clauses may not have the same tuple, though two clauses may have the
same set of variables if the order in which they appear in the respective
tuples is different. Hence there are exactly $n^{3}$ possible tuples,
and the number of clauses satisfies $m\le n^{3}$.

The universal 3CNF formula is constructed as follows. Write down all
$n^{3}$ possible tuples. For every tuple $T_{i}$ introduce an auxiliary
variable $z_{i}$. For each tuple $T_{i}$ introduce two clauses as
in the following example. If $T_{i}=x_{1},x_{2},x_{3}$ then the two
clauses are $(x_{1}\vee x_{2}\vee z_{i})$ and $(x_{3}\vee z_{i}\vee z_{i})$.
This gives a formula $F$ with $N=n+n^{3}$ variables and $M=2n^{3}$
clauses.

Every 3CNF instance $f$ with $n$ variables can be embedded in $F$,
by appropriately setting only the polarities of variables. For a given
tuple $T_{i}$, if no clause with this tuple is in $f$, then in $F$
give all occurrences of $z_{i}$ positive polarity. By setting $z_{i}$
to true this corresponding tuple drops also from $F$. But if a clause
with tuple $T_{i}$ appears in $f$ (there can be at most one such
clause), then do as in the following example. If the clause is $(x_{1}\vee\bar{x}_{2}\vee x_{3}$)
then in F set the polarities of the clauses derived from $T_{i}$
to $(x_{1}\vee\bar{x}_{2}\vee\bar{z}_{i})$ and $(x_{3}\vee z_{i}\vee z_{i})$.
Any assignment that satisfies these two clauses in $F$ satisfies
also the original clause in $f$.

The above implies that the factor graph of $F$ is polytime-universal
(for 3SAT with $n$ variables). Any algorithm that decides satisfiability
for formulas whose factor graph is that of $F$ can be used to decide
satisfiability of any 3CNF formula with $n$ variables, by following
the above embedding.
\end{proof}

\section{Subexponential-Universal Factor Graphs}
\label{app:SubexpUFG}

In this section we prove Theorem \ref{thm:subexp}. Recall that this involves proving Lemma~\ref{lem:IPZ} and Lemma~\ref{lem:polylog}, and then combining them appropriately.

\begin{lemma}
\label{lem:IPZ} Given a 3CNF formula $\varphi$ with
$n$ variables and any number of clauses (at most $O(n^3)$ as clauses may be assumed to be distinct), for every $0 < \epsilon < 1/10$
there is an algorithm that runs in time $2^{O(n^{1 - \epsilon})}$ and produces at most $2^{n^{1 - \epsilon}}$
new 3CNF formulas, each with $n$ variables and at most $O(n^{1 + 2\epsilon} (\log n)^2)$
clauses,
such that $\varphi$ is satisfiable iff at least one of the new
formulas is satisfiable.
\end{lemma}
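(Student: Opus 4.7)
The plan is to adapt the Impagliazzo--Paturi--Zane sparsification lemma~\cite{IPZ01} with a sub-constant value of its parameter. Classical IPZ asserts that for every constant $\mu>0$ and integer $k$, any $k$-CNF on $n$ variables can be expressed in time $2^{\mu n}\cdot\mathrm{poly}(n)$ as the disjunction of at most $2^{\mu n}$ $k$-CNFs, each with at most $C_{k,\mu}\cdot n$ clauses, where $C_{k,\mu}$ can be taken polynomial in $1/\mu$. First, I would set $\mu:=n^{-\epsilon}$, so that the number of subformulas becomes $2^{n^{1-\epsilon}}$ and the total running time is $2^{O(n^{1-\epsilon})}$; this requires revisiting the original algorithm to confirm that no step implicitly assumes $\mu$ is bounded away from~$0$.

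Second, I would trace the IPZ algorithm to extract an explicit polynomial dependence of $C_{3,\mu}$ on $\mu$. Recall that IPZ iteratively apply the sunflower lemma: whenever more than $j!(s-1)^j$ clauses of some length $j$ remain, some $s$ of them share a common core $H$, and one branches on whether $H$ is satisfied (in which case the $s$ sunflower clauses collapse to the shorter clause $H$, to be re-sparsified at the next length level) or every literal of $H$ is set to false (in which case the corresponding variables are assigned). Choosing the sunflower size $s$ to grow as a small polynomial in $n$ and cascading the argument through clause lengths $3, 2, 1$ should leave at most $O(n^{1+2\epsilon}(\log n)^2)$ clauses per leaf; the two logarithmic factors appear naturally from the $O(\log n)$ recursion depth of the cascade, since each round either removes a constant fraction of the clauses of the current length or assigns a variable.

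The main obstacle will be the precise bookkeeping that turns the standard $C_{k,\mu}=\mathrm{poly}(1/\mu)$ guarantee into the tighter pre-factor $O(n^{2\epsilon}(\log n)^2)$ demanded here, while simultaneously keeping the cumulative branching depth within $\mu n=n^{1-\epsilon}$. This amounts to balancing the sunflower size $s$ (larger $s$ means more residual clauses per leaf but fewer overall branchings) against the branching budget, and the logarithmic slack in the target bound appears to leave just enough room for the tradeoff to close. Finally, each output must be a genuine 3-CNF on the original $n$ variables, so any clauses of length less than $3$ that arise from case-(a) branches are padded by repeating literals; since no auxiliary variables are ever introduced by the sparsification, the reduction meets the stated form, and satisfiability of $\varphi$ is preserved across the disjunction by construction.
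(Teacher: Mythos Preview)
Your proposal is correct and follows essentially the same route as the paper: the paper also derives the lemma by invoking the Impagliazzo--Paturi--Zane sparsification lemma with the sub-constant choice $\mu=\epsilon(n)=n^{-\epsilon}$, noting that the original IPZ argument goes through verbatim for non-constant parameters and that the explicit clause bound $nk(4\alpha(n))^{2^{k-2}}$ with $\alpha(n)=\Theta(n^{\epsilon}\log n)$ yields the claimed $O(n^{1+2\epsilon}(\log n)^{2})$ for $k=3$. The only difference is presentational: the paper quotes a fully parametrized version of IPZ and substitutes, whereas you plan to re-trace the branching argument to extract the same dependence.
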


Lemma~\ref{lem:IPZ} follows by substituting $\epsilon(n) = n^{-\epsilon}$ in the following lemma.

\begin{lemma}
\label{LemmaLinearSize}Given a $k$-CNF formula, $\varphi$, with
$n$ variables, $0<\epsilon\left(n\right)<1$, and $\alpha\left(n\right)$
with $\frac{\alpha\left(n\right)}{\log4\alpha\left(n\right)}>4k2^{k-1}\epsilon^{-1}\left(n\right)$,
there is an algorithm that produces at most $2^{\epsilon\left(n\right)n}$
$k$-CNF formulas, each with at most $nk\left(4\alpha\left(n\right)\right)^{2^{k-2}}$
clauses and $n$ variables in time $2^{\epsilon\left(n\right)n}n^{\left(4\alpha\left(n\right)\right)^{2^{k-2}}}\mathrm{poly}\left(n\right)$
such that $\varphi$ is satisfiable iff at least one of the outputted
formulas is satisfiable.
\end{lemma}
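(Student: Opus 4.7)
The plan is to adapt the sparsification lemma of Impagliazzo, Paturi, and Zane to the specific parameter regime stated here. The central object is a $j$-\emph{flower} in the formula, for some $1 \le j \le k-1$: a set of clauses that share a common set $H$ of $j$ literals (the \emph{heart}), whose remaining $k-j$ literals (the \emph{petals}) are pairwise disjoint in the variables they involve. Using a counting/pigeonhole argument, I would show that if the current formula contains, at some clause length $k' \le k$, more than a threshold $T_{k'} = n \cdot k \cdot (4\alpha)^{k'-1}$ clauses, then there must exist a $j$-flower of at least $\alpha$ petals for some $1 \le j < k'$, and such a flower can be located in polynomial time by a greedy search. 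The threshold $T_{k'}$ is chosen so that the sunflower-free bound (Erdős–Ko–Rado flavor) forces a large flower whenever it is exceeded.

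Given a flower with heart $H$ and $\alpha$ petals, I would branch into two sub-instances. In branch (a) we assume $H$ is satisfied, remove every clause containing $H$, and add the shorter clause $\bigvee H$ of length $j < k'$. In branch (b) we assume $H$ is falsified, fixing $j \ge 1$ variables, so every clause containing $H$ reduces to its petal, a clause of length $k'-j$. Both branches introduce shorter clauses, so I would recurse, applying the same flower-extraction at every length $k' = k, k-1, \ldots, 2$ in turn, until at every length the sparsity threshold $T_{k'}$ is satisfied. An amortized potential argument (charging each variable fixed by branch (b) for the work done, and charging the removal of $\alpha$ clauses in branch (a)) bounds the number of leaves of the branching tree by roughly $(1 + 1/\alpha)^{n k 2^{k-1}}$, which by the hypothesis $\alpha / \log(4\alpha) > 4 k 2^{k-1}/\epsilon$ is at most $2^{\epsilon n}$. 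Each leaf of the tree corresponds to one of the output $k$-CNF formulas (padding surviving shorter clauses with fixed tautological literals on already-assigned variables to restore uniform clause length $k$).

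The per-formula clause bound of $nk(4\alpha)^{2^{k-2}}$ arises from the recursive interaction between clause lengths: once the $k$-clause sparsity threshold has been enforced, branch (a) can inject a shorter clause that triggers the sub-recursion at length $k-1$, whose threshold in turn accumulates clauses created at length $k$, and so on, producing a bound that (roughly) squares at each level of clause length and thus yields the doubly-exponential $2^{k-2}$ exponent on $4\alpha$. The running-time bound of $2^{\epsilon n} \cdot n^{(4\alpha)^{2^{k-2}}} \cdot \mathrm{poly}(n)$ then follows because each of the $2^{\epsilon n}$ branching nodes performs at most $n^{(4\alpha)^{2^{k-2}}}$ work to search for a flower among the clauses present at that node. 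I expect the main obstacle to be the clean two-dimensional bookkeeping over both branching depth and clause length: one must maintain simultaneously that (i) the branching tree has at most $2^{\epsilon n}$ leaves, and (ii) at each leaf the clauses at every length are below the level-appropriate sparsity threshold, and these two invariants interact through $\alpha$ in a way that tightly pins down the hypothesis on $\alpha / \log(4\alpha)$.
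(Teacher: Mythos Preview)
Your proposal is correct and follows the same approach as the paper: both invoke the sparsification lemma of Impagliazzo, Paturi, and Zane~\cite{IPZ01}. In fact, the paper's own ``proof'' consists of a single sentence observing that the IPZ argument, originally stated for constant $\epsilon$ and $\alpha$, goes through verbatim when these are allowed to depend on $n$; your sketch of the flower-based branching procedure is considerably more detailed than what the paper itself provides.
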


\begin{proof} A similar statement was proved by Impagliazzo et al.~\cite{IPZ01}
with constant $\epsilon$ and $\alpha$, and the same proof works
when they are not constant.\end{proof}

We now prove Lemma~\ref{lem:polylog}.

\begin{proof}
We first construct a nondeterministic circuit that receives as input an E3CNF formula with $n$ variables and $m$ clauses and outputs~1 if the formula is satisfiable. We wish to keep the circuit small, of size $\mathrm{O}\left(m\log m\log n\right)$. For this reason, the nondeterministic aspect of the circuit will not be a guess of the assignment to the variables (which amounts to $n$ nondeterministic guesses), but rather a selection of one index per clause (hence $m$ nondetermistic guesses, each among three possibilities), indicating a literal that satisfies this clause. The consistency of all these selections (namely, not selecting a variable in one clause and its negation in a different clause) will be checked using a circuit that mimics an oblivious sorting network. All selected literals will be sorted, implying that if there is a variable who was selected both positively and negatively, these two contradicting selections will ``meet" during the sorting processes and the inconsistency will be detected. As there are oblivious sorting networks that sort $m$ numbers using $O(m \log m)$ comparisons~\cite{AKS83}, the size of the circuit will remain bounded by $O(m \log m \log n)$ (the extra $\log n$ term comes from the fact that it takes $\log n$ bits to specify each of the sorted numbers).
Such a nondeterministic circuit outputs 1 iff the
formula is satisfiable: a consistent selection of literals can always be completed to a satisfying assignment (by giving arbitrary values to variables for which no occurrence of their literals was selected), whereas given a satisfying assignment a consistent selection is obtained by selecting the first satisfied literal in every clause.

We now provide more details on the construction of the circuit.
The circuit takes
$\mathrm{O}\left(3m\log n\right)$ input bits. $\mathrm{O}\left(\log n\right)$
bits are used to represent each literal, with the least significant
bit used to indicate if the literal is negated or not. In addition,
there are 2 nondeterministic input bits per clause, used to select one of the three literals
in the clause. The selection of
literals can be done by using $\mathrm{O}\left(\log n\right)$ 3-to-1
multiplexers. The selected literals are sorted using a sorting network
of size $\mathrm{O}\left(m\log m\right)$ (see \cite{AKS83,P90}),
where each comparison is done by adding the representation of one
literal to the two's complement of the representation of the other
literal, using $\mathrm{O}\left(\log n\right)$ adders, and the most
significant bit of the result determines the output of the comparison.
The literals are switched or not, depending on the result of the comparison,
using $\mathrm{O}\left(\log n\right)$ 2-to-1 multiplexers. Lastly,
when all literals are sorted, each consecutive pair (with overlapping
pairs) is checked that it does not contain the representation of a
variable and its negation (all but last bit equal, using $\mathrm{O}\left(\log n\right)$
gates).

Given an E3CNF formula $\varphi$ with $m$ clauses and $n$ variables,
we use the circuit described above, to construct a 3-CNF formula $\Phi_{\varphi}$
that is satisfiable iff $\varphi$ is satisfiable. Additionally, if
$\psi$ is another E3CNF formula with $m$ clauses and $n$ variables,
the factor graph of $\Phi_{\psi}$ is also the factor graph of $\Phi_{\varphi}$.\\
For every input of a gate and for the output of the circuit there
will be a variable. Note that the output of every gate is an input
of some other gate or the output of the circuit.\\
Each gate contributes a bounded number of clauses to $\Phi_{\varphi}$ that encode the requirement that
the output of the gate is correct. For example, a NAND gate with inputs $x,y$
and output $z$ would contribute the clauses $\bar{x}\vee\bar{y}\vee\bar{z}$,
$\bar{x}\vee y\vee z$, $x\vee y\vee z$, $x\vee\bar{y}\vee z$. This ensures that a satisfying assignment to
$\Phi_{\varphi}$ is a
valid calculation of the circuit. That is, every variable has the
value passed to or from each gate.\\
Let $o$ be the variable representing the output of the circuit.
The clause $o$ is also added to $\Phi_{\varphi}$. This ensures that
an assignment satisfies the formula iff the output of the circuit
is 1.\\
Let $i_{1},\cdots i_{3m\left(\left\lceil \log_{2}n\right\rceil +1\right)}$
be the input bits of the circuit. For every $j$, either the clause
$i_{j}$ or the clause $\bar{i_{j}}$ is added to $\Phi_{\varphi}$,
depending on the representation of $\varphi$. This ensures that a
satisfiable assignment to $\Phi_{\varphi}$ has the representation
of $\varphi$ in the input of the circuit. Note that the only difference
between $\Phi_{\varphi}$ and $\Phi_{\psi}$ is in the polarity of
these clauses.

A standard transformation can be used to transform the formula from
3CNF to E3CNF.

If $\Phi_{\varphi}$ is satisfiable, the variables representing the
selector bits prove that the circuit can be made to output 1, when
$\varphi$ is given as input. If there are selector bits that make
the circuit output 1 on $\varphi$, setting each variable to its respective
input/output in the circuit, shows that $\Phi_{\varphi}$ is satisfiable.
\end{proof}

Equipped with Lemmas~\ref{lem:IPZ} and~\ref{lem:polylog}, we now prove Theorem \ref{thm:subexp}.

\begin{proof}
For simplicity of the presentation, we omit the $O$ notation in the expressions that we derive.

Assume that for some $0 < \delta < 1/10$ a hypothetical algorithm H can solve any instance on the universal factor graphs of Lemma~\ref{lem:polylog} in time $2^{N^{1 - \delta}}$. Consider now an arbitrary 3SAT instance with $n$ variables. For $\epsilon = \delta/3$, use Lemma~\ref{lem:IPZ} to create $2^{n^{1 - \epsilon}}$ new 3CNF formulas with at most $n^{1+2\epsilon}(\log n)^2$ clauses. Use Lemma~\ref{lem:polylog} to reduce every such 3CNF instance to an instance on a universal factor graph with $N = n^{1+2\epsilon}(\log n)^4$ variables. Use algorithm $H$ to solve these instances, thus obtaining the solution to the original 3SAT instance. The choice of $\epsilon = \delta/3$ implies that this whole procedure takes time roughly $2^{n^{1 - \epsilon}}$.
\end{proof}

\section{APX-Universal Factor Graphs\label{app:APXUFG}}

In this section we prove Theorem~\ref{thm:APX}.

Any universal factor graph for 3SAT (e.g., the result of Theorem~\ref{thm:polytime}) is $\left(1-\frac{1}{m}\right)$-universal, where $m$ is the number of clauses.
In order
to create a $(\mbox{1-\ensuremath{\epsilon}})$-universal factor graph
(for some $\epsilon>0$) the instances will go through an iterative process,
increasing the worst case unsatisfiability of the formulas by a factor
of 2, while increasing the size of the factor graph by a constant
factor. The construction is based on the combinatorial method to prove
the PCP theorem by Dinur \cite{D07}, and closely follows the proof
of Radhakrishnan and Sudan \cite{RS07}. Familiarity with these earlier proofs (an overview of which can be found in~\cite{RS07}) can aide the reader in following our proof.

\subsection{Definitions}

\begin{definition}A \emph{constraint satisfaction problem} (CSP) has the
form $P= \left(V,\Sigma,C\right)$, where $v$ is the set of variables,
$\Sigma$ is the alphabet, and $C$ is the set of constraints. A constraint
is $c=\left\langle U,f\right\rangle $, where $U\subset V$ and $f:\Sigma^{U}\to\left\{ 0,1\right\} $.
An \emph{assignment} is a function $a:V\to\Sigma$, giving each variable
a value. Given an assignment $a$, a constraint $c=\left\langle U,f\right\rangle $
is said to be \emph{satisfied} by the assignment (usually, the assignment
will be implied from the context) if $f\left(a|_{U}\right)=1$, otherwise
it is \emph{unsatisfied} by the assignment.

Given a constraint satisfaction problem $P$, $\mathrm{UNSAT}\left(P\right)$
is the minimal fraction of constraints (over all assignments) that
are unsatisfied. The \emph{size} of $P$ is $\left|P\right|=\left|V\right|+\left|C\right|$.\end{definition}

\begin{definition}

A \emph{constraint hypergraph} $H=\left(V,E,\Sigma,C\right)$ is an
alternative definition of a CSP, where $V,\Sigma,C$ are as in the
definition of a CSP, and for every constraint $c=\left\langle U,f\right\rangle $,
$U\in E$.

The \emph{structure} of a constraint hypergraph $H=\left(V,E,\Sigma,C\right)$
is the hypergraph $\left(V,E\right)$.

The \emph{rank} of $H$ is the maximal cardinality of an edge.

In the special case where all sets in $E$ have cardinality 2, $H$
is a \emph{constraint graph}.\end{definition}

The following definition adds parametrization to the earlier definition of FGPR given in Section~\ref{sec:threshold}.

\begin{definition}
A $\left(\delta,d\right)$\emph{-FGPR} (Factor Graph Preserving Reduction)
is a transformation of instances of one class of CSP to instances
of another class of CSP with the additional requirements:\end{definition}
\begin{itemize}
\item If the factor graphs of $A$ and $B$ are equal, then the factor graphs
of their transformations are equal.

\item There is some constant $\xi>0$ such that if $A$ is transformed to
$A'$ then:

\begin{itemize}
\item if $\mathrm{UNSAT}\left(A\right)=0$, then $\mathrm{UNSAT}\left(A'\right)=0$.
\item if $\mathrm{UNSAT}\left(A\right)\geq\epsilon$, then $\mathrm{UNSAT}\left(A'\right)\geq\delta\min\left\{ \epsilon,\xi\right\} $.
\end{itemize}
\item $d\left|A\right|\geq\left|A'\right|$
\end{itemize}
For example, the standard reduction from 3SAT to E3SAT is a $\left(\nicefrac{1}{4},4\right)$-FGPR ($\left(\nicefrac{1}{2},2\right)$-FGPR
if all clauses have at least two distinct variables).

In order to create an APX-universal factor graph a $\left(\delta,d\right)$-FGPR
with $\delta>1$ will be constructed, using a composition of several
FGPRs. In order to compose these FGPRs correctly, each will need to
have additional properties.

It will convenient for us to represent constraints as polynomials. For example, a constraint of the form
{}``the first bit in the representation of the variable $x$ is equal
to the second bit of the representation of the variable $y$'' can
be represented as requiring that polynomial $x_{1}+y_{2}$ be equal
to 0 (where $x_{1}$ refers to the value of the first bit in the assignment
of $x$, and $y_{2}$ refers to the second bit of the assignment of
$y$).
\begin{definition}
Let $\Sigma=\mathbb{F}_{2}^{k}$. A constraint $e$ is \emph{m-restricted
}if it can be represented as a set of up to $m$ polynomials of degree
two, $\left\{ P_{i}^{e}\right\} $, such that the constraint is satisfied
iff all the polynomials are 0 (where the assignment of variables is
treated as the values of $k$ Boolean variables). In such case we
say that $e$ is \emph{associated} with $\left\{ P_{i}^{e}\right\} $.
An \emph{m-restricted constraint (hyper)graph} is a constraint (hyper)graph
with only $m$-restricted constraints.
\begin{definition}
Two $m$-restricted constraint hypergraphs are \emph{close} if they
share the same factor graph and alphabet, and for every edge $e$
of the factor graph, if the constraint corresponding to $e$ is associated
with $\left\{ P_{i}^{e}\right\} $ then the constraint corresponding
to $e$ in the other graph is associated with $\left\{ P_{i}^{e}+b_{i}^{e}\right\} $,
where $b_{i}^{e}\in\left\{ 0,1\right\} $.
\end{definition}
\end{definition}

\subsection{Changing the Representation}
\begin{lemma}
\label{Formula2Graph}There is an explicit $\left(\nicefrac{1}{4},6\right)$-FGPR
from 3-SAT to 2-restricted constraint graphs. Additionally, if $H,H'$
are created from $\varphi,\varphi'$, respectively, using the specified
reduction and $\varphi,\varphi'$ have the same factor graph then
$H$ and $H'$ are close.\end{lemma}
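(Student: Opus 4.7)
Plan: The goal is to turn every 3CNF formula $\varphi$ on $n$ variables and $m$ clauses into a 2-restricted constraint graph $H$ in such a way that the structure of $H$ is read off the factor graph of $\varphi$ alone, with only the constants inside the quadratic polynomials moving when polarities change. To this end I would introduce, for each clause $c=\ell_1\vee\ell_2\vee\ell_3$, one fresh clause-vertex $v_c$ over the alphabet $\mathbb{F}_2^{4}$, interpreted as a tuple $(b_1,b_2,b_3,t)$ where $b_i$ is a purported value of the literal $\ell_i$ and $t$ is an auxiliary bit. The variables of $\varphi$ stay as constraint-graph vertices over $\mathbb{F}_2$, and for each clause $c$ I add exactly three edges $(v_c,x_{i_1}),(v_c,x_{i_2}),(v_c,x_{i_3})$. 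Writing $p_j\in\{0,1\}$ for the polarity of $x_{i_j}$ in $c$ (so $\ell_j=x_{i_j}+p_j$), I would associate the three edges with the quadratic polynomials
\begin{align*}
(v_c,x_{i_1})&: \bigl\{\; b_1+x_{i_1}+p_1,\;\; t+b_1+b_2+b_1b_2\;\bigr\},\\
(v_c,x_{i_2})&: \bigl\{\; b_2+x_{i_2}+p_2,\;\; t b_3+t+b_3+1\;\bigr\},\\
(v_c,x_{i_3})&: \bigl\{\; b_3+x_{i_3}+p_3\;\bigr\}.
\end{align*}
The first polynomial on each edge enforces $b_j=x_{i_j}+p_j$ (so $b_j$ really is the literal), and the two non-linear polynomials jointly implement the OR-check by setting $t=b_1\vee b_2$ and then demanding $t\vee b_3$; this is the key trick, since the natural OR-constraint is cubic, but after introducing $t$ it splits into two degree-$2$ polynomials that can be placed on two different edges incident to $v_c$.

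Next I would verify the three FGPR properties. For the structural requirement, note that vertices, alphabets and edges only depend on which variables appear in which clauses, i.e., on the factor graph of $\varphi$; and changing the polarity of $x_{i_j}$ in a single clause $c$ only flips the constant $p_j$ by $1$, which shifts precisely one polynomial on one edge by an additive constant in $\{0,1\}$ and leaves every other polynomial untouched. This is exactly the condition for the two resulting constraint hypergraphs to be \emph{close}. For completeness, any satisfying assignment $\alpha$ of $\varphi$ lifts to an assignment of $H$ by setting $b_j=\alpha(x_{i_j})+p_j$ and $t=b_1\vee b_2$; all polynomials then vanish.

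For the soundness/gap part, I would take an arbitrary assignment of $H$ and restrict it to the $x_i$-vertices to obtain an assignment $\alpha$ of $\varphi$; by hypothesis, $\alpha$ falsifies at least $\epsilon m$ clauses. For each such falsified clause $c$, I claim at least one of its three edges fails: if all three consistency polynomials $b_j+x_{i_j}+p_j$ hold, then $b_j=x_{i_j}+p_j=0$ for $j=1,2,3$, so the OR-check polynomials cannot both vanish. Hence at least $\epsilon m$ of the $3m$ edges are unsatisfied, giving $\mathrm{UNSAT}(H)\ge \epsilon/3 \ge \tfrac14\min\{\epsilon,1\}$. For the size bound, $|H|=(n+m)+3m\le 4(n+m)=4|\varphi|\le 6|\varphi|$, so the reduction is a $(\tfrac14,6)$-FGPR.

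The only genuinely delicate step is the cubic-to-quadratic decomposition of the OR-constraint; this is what forces the alphabet $\mathbb{F}_2^{4}$ (rather than $\mathbb{F}_2^{3}$) and dictates how polynomials must be split across the three edges of each clause. Once that encoding is in place, factor-graph preservation, closeness, and the gap accounting follow routinely.
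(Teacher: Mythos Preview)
Your argument is correct and rests on the same key idea as the paper: introduce fresh ``literal'' bits $b_j$ together with an auxiliary bit $t$ so that the cubic OR decomposes into the two quadratics $t=b_1\vee b_2$ and $t\vee b_3=1$, and confine all polarity-dependence to the linear consistency constraints $b_j+x_{i_j}+p_j$, whose free constant is the only thing that flips. The packaging differs slightly. The paper works over $\Sigma=\mathbb{F}_2^{3}$ and splits each $3$-clause into \emph{two} clause vertices $u_{ij}$ (holding $b_1,b_2,t$) and $u_k$ (holding $b_3$), with one dedicated edge between them carrying the two OR-polynomials and three further consistency edges to the variable vertices---four edges per clause, whence the ratio $1/4$. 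You instead take $\Sigma=\mathbb{F}_2^{4}$, use a single clause vertex $v_c$ holding all of $b_1,b_2,b_3,t$, and piggyback the two OR-polynomials onto two of the three consistency edges---three edges per clause, giving the slightly better ratio $1/3$. Either choice works; the paper's buys a smaller alphabet, yours fewer edges. Two cosmetic points: the CSP definition in the paper has a single alphabet for all vertices, so you should take $\Sigma=\mathbb{F}_2^{4}$ for the variable vertices as well and simply ignore the spare coordinates; and if clauses of width $<3$ are permitted (the paper's proof treats these cases explicitly), your construction needs the obvious minor adaptations.
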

\begin{proof}
Each vertex in the constraint graph will represent up to two literals
plus an additional bit, so $\Sigma=\mathbb{F}_{2}^{3}$. For a vertex
$w$, $A_{i}\left(w\right)$ will correspond to the $i$'th bit of
the assignment of the vertex. The value of the first bits is intended
to be the value of the represented literals, 0 if true, 1 if false
(and the constraints will try to enforce that).

For every variable $v_{i}$ in the formula there will be a corresponding
vertex $w_{i}$. For every clause of the form $x_{i}\vee x_{j}\vee x_{k}$
(where $x_{\ell}$ is $v_{\ell}$ or $\bar{v}_{\ell}$) there will
be two vertices: $u_{ij},u_{k}$ with an edge between them. The constraint
corresponding to the edge $\left(u_{ij},u_{k}\right)$ expects the
following two polynomials to be satisfied: $A_{3}\left(u_{ij}\right)=A_{1}\left(u_{ij}\right)A_{2}\left(u_{ij}\right)$,
(the last bit is true if one of the represented literals is true)
and $A_{3}\left(u_{ij}\right)A_{1}\left(u_{k}\right)=0$ (the clause
is satisfied). For every clause of the form $x_{i}\vee x_{j}$ there
will be two vertices, $u_{i}$ and $u_{j}$ with an edge between them
with the constraint $A\left(u_{i}\right)A\left(u_{j}\right)=0$ (the
clause is satisfied). For every clause of the form $x_{i}$ a self
loop will be added to the vertex $w_{i}$ with constraint $A_{1}\left(v_{i}\right)=b$,
where $b=0$ if $x_{\ell}$ is $v_{\ell}$ and $b=1$ otherwise.

In addition, consistency constraints will be added: $\left(u_{ij},w_{i}\right),\left(u_{ij},w_{j}\right),\left(u_{k},w_{k}\right)$
with respective constraints $A_{1}\left(u_{ij}\right)-A_{1}\left(w_{i}\right)=b_{i}^{e}$,
$A_{2}\left(u_{ij}\right)-A_{1}\left(w_{j}\right)=b_{j}^{e}$, $A_{1}\left(u_{k}\right)-A_{1}\left(w_{k}\right)=b_{k}^{e}$,
where $b_{\ell}^{e}$ is 0 or 1, depending on whether the variable
$v_{\ell}$ or its negation appear in the clause the first vertex
of the edge corresponds to.

Every clause is responsible for the creation of at most two vertices
and four edges. Every variable is responsible for the creation of
one vertex. Thus $\left|H\right|\leq6\left|\varphi\right|$.

If $\mathrm{UNSAT}\left(\varphi\right)=0$, there is a assignment
$a_{i}$ for each $v_{i}$ satisfying all clauses. setting $A_{1}\left(w_{\ell}\right)$
according to this assignment (0 if $a_{\ell}$ is true, 1 otherwise),
and setting $A_{1}\left(u_{k}\right),A_{1}\left(u_{ij}\right),A_{2}\left(u_{ij}\right)$
according to the value of the corresponding literal using the assignment
will satisfy all edges of the graph.

If $\mathrm{UNSAT}\left(G\right)<\epsilon$, the best assignment to
vertices can be transformed into an assignment to variables. If an
edge is unsatisfied, the clause that generated that edge is considered
to be unsatisfied. All edges and variables generated by this clause
will be removed. Repeating this for as long as unsatisfied edges remain,
we are left with a completely satisfiable graph, with all consistency
constraints holding. Thus, the assignment for the graph can be transformed
to an assignment for the formula. Each unsatisfied edge may have caused
a single clause to be unsatisfied, and since there are at most 4 times
as many edges as there are clauses, $\mathrm{UNSAT}\left(\varphi\right)<4\epsilon$.

Lastly, it is immediate that all formulas that have the same factor
graph generate close constraint graphs.
\end{proof}

\subsection{Gap Amplification}
\begin{definition}
An $\left(\eta,d\right)$\emph{-expander} is a regular graph $G=\left(V,E\right)$
with degree $d$ and for all $S\subset V$ with $\left|S\right|\leq\nicefrac{\left|V\right|}{2}$,
$\left|\left\{ \left(u,v\right)\in E|u\in S,v\notin S\right\} \right|\geq\eta\left|S\right|$.
An \emph{$\left(\eta,d\right)$}-expander is \emph{positive} if every
vertex has at least $\nicefrac{d}{2}$ self loops.\end{definition}
\begin{theorem}
[See several constructions in \cite{HLW06}]There are $\eta,d>0$
such that positive $\left(\eta,d\right)$-expander graphs exist on
$n$ vertices, for all $n>0$.
\end{theorem}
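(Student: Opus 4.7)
The plan is to reduce this to the standard existence of (non-positive) expanders referenced in \cite{HLW06}, and then convert any such expander into a positive one by a simple local augmentation. First, I would fix some explicit family of $(\eta',d')$-expanders from \cite{HLW06}, e.g.~a Ramanujan-type or Margulis-style construction, giving for a suitable infinite set of sizes a regular expander of degree $d'$ with edge-expansion at least $\eta'$.

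Next I would argue that self loops can be attached without harming edge-expansion. Specifically, let $G$ be a $(\eta',d')$-expander on $N$ vertices, and form $G^+$ by adding $d'$ self loops to every vertex. Then $G^+$ is $2d'$-regular and every vertex has exactly $d'/(2d')=1/2$ of its incident edges as self loops, so $G^+$ is positive in the sense of the definition. For any $S\subseteq V$ with $|S|\le N/2$, self loops contribute nothing to the edge cut, so the cut in $G^+$ still has size at least $\eta'|S|$. Hence $G^+$ is a positive $(\eta',2d')$-expander, and setting $\eta=\eta'$, $d=2d'$ gives the required constants.

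The remaining obstacle is that the off-the-shelf constructions typically only produce expanders for a restricted infinite set of sizes (e.g.~primes $p$ with certain residue conditions, powers of a fixed prime, or products of specific form), whereas the theorem asks for every $n>0$. I would address this by a standard padding argument: for a given $n$, pick some $N$ in the allowed set with $n\le N\le Cn$ for a fixed constant $C$ (this is the step that would need a short density argument specific to the chosen family), take an expander on $N$ vertices, and then discard or merge vertices to reach exactly $n$. A cleaner alternative, which I would probably use, is to take the disjoint union of at most $O(1)$ expanders from the construction whose sizes sum to $n$ (padded by a constant-size gadget if necessary), and then add a constant number of matching edges plus extra self loops to restore regularity and connectivity; since edge-expansion is preserved under disjoint union up to a constant factor for sets of size $\le n/2$, this still yields a positive $(\eta,d)$-expander for universal constants $\eta,d>0$. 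The main thing to check carefully is that after these size-adjustment manipulations, one still has degree exactly some fixed $d$, at least $d/2$ self loops at every vertex, and expansion bounded below by a constant independent of $n$.
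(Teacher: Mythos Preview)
The paper does not prove this theorem at all: it is stated as a black-box fact with a citation to \cite{HLW06}, so there is no ``paper's own proof'' to compare against. Your reduction to standard expanders plus self-loops is exactly the right idea and is correct as stated---adding $d'$ self loops to a $d'$-regular $(\eta',d')$-expander yields a $2d'$-regular positive expander with the same edge cuts, hence the same $\eta'$.

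There is, however, a genuine gap in your size-adjustment discussion. The ``cleaner alternative'' you propose---taking a disjoint union of $O(1)$ expanders whose sizes sum to $n$ and then adding a constant number of matching edges---does \emph{not} work. A disjoint union of expanders is not an expander: taking $S$ to be one entire component gives zero crossing edges, and patching this with only $O(1)$ inter-component edges gives expansion $O(1)/|S|$, which tends to $0$ as $n$ grows. Your first suggestion (pick $N$ with $n\le N\le Cn$, build the expander on $N$ vertices, then merge vertices down to $n$) is the standard fix and does work, but you should commit to it and verify the details: merging each group of at most $C$ vertices into one keeps the graph regular of degree at most $Cd'$ (pad with self loops to equalize), and any set $S$ of merged vertices lifts to a set of size at most $C|S|\le N/2$ in the original, so the cut is still at least $\eta'|S|$. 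That, together with your self-loop step, gives the statement for all $n$.
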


\begin{theorem}
\label{GapAmplification}There is a universal constant $\alpha$ such
that for every $k,m,t\in\mathbb{N}$, there is an explicit $\left(\delta_{1}^{*}\left(t\right),c_{1}\left(t\right)\right)$-FGPR
from (m-restricted) constraint graphs with alphabet \textup{$\mathbb{F}_{2}^{k}$}
to ($\mathrm{O}\left(mt\right)$-restricted) constraint graphs with
alphabet \textup{$\mathbb{F}_{2}^{s\left(k,t\right)}$ }with \textup{$\delta_{1}^{*}\left(t\right)\geq\alpha t$.}

Furthermore, every constraint of the produced constraint graph is
a conjunction of $\mathrm{O}\left(t\right)$ constraints from the
input graph and equality constraints, and the set of constraints only
depends on the factor graph of the input.

Specifically, two m-restricted close constraint graph are transformed
into two close $\mathrm{O}\left(mt\right)$-restricted constraint
graphs.
\end{theorem}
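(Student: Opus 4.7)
The plan is to follow the powering construction from Dinur's proof of the PCP theorem, in the form presented by Radhakrishnan and Sudan, while tracking carefully how each constraint is \emph{represented} as a set of polynomials. This is exactly what will let us conclude that the transformation is factor-graph preserving and that the $m$-restricted and closeness structure survive with the claimed blowup.

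First I would perform the standard preprocessing: (i) ``expanderize'' the input graph by replacing each vertex $v$ by a cloud on $\deg(v)$ vertices forming a constant-degree expander, with equality constraints along the cloud's edges and with the original edges of $H$ reattached, one per cloud-vertex, so as to make the graph constant-degree and regular; (ii) superimpose a fixed constant-degree positive expander on the whole vertex set, using trivial (always-satisfied) constraints on the expander edges so as to improve the spectral gap. Both steps depend only on the structure of the factor graph, so they are factor-graph preserving and map close inputs to close inputs; the added equality and trivial constraints are each associated with $O(k)$ degree-two polynomials, so the result remains $O(m+k)$-restricted.

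Next I would form the $t$-power $H^t$. The vertex set is unchanged, the new alphabet is $\mathbb{F}_2^{s(k,t)}$ where $s(k,t) = k\cdot d^{O(t)}$ encodes a guessed assignment to every vertex inside the $t$-ball around the given vertex, and for every pair $(u,v)$ lying within distance $2t+1$ in the preprocessed graph we install a new edge. The constraint on this edge is the conjunction of two kinds of atomic constraints: the \emph{equality} constraints forcing the local $t$-ball views of $u$ and $v$ to agree on their common vertices, and the \emph{propagated} constraints, consisting of every original edge constraint of $H$ lying on a canonical collection of $u$-to-$v$ paths of length at most $2t+1$. Since each such conjunction picks up $O(t)$ original constraints, if the input was $m$-restricted the output is $O(mt)$-restricted; the new polynomials are simply the original ones, interpreted as functions of the relevant coordinates of the local-view alphabet, together with equality polynomials. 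Crucially, the \emph{identity} of the constraints appearing in each conjunction depends only on the factor graph of the preprocessed graph, hence only on the original factor graph. So if two inputs are close, meaning each original constraint has been flipped by adding constants $b_i^e$ to its polynomials, the corresponding polynomials inside each conjunction in $H^t$ are flipped by exactly the same constants, and no other change occurs; the outputs are close as required.

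The main obstacle is the quantitative gap amplification itself, i.e.\ showing that $\mathrm{UNSAT}(H)\ge\epsilon$ implies $\mathrm{UNSAT}(H^t)\ge \alpha t\cdot \min(\epsilon,\xi)$ for a universal constant $\alpha$ and some constant $\xi>0$. This is the heart of Dinur's argument: given an assignment $A'$ to $H^t$ one defines a randomized rounding back to an assignment $A$ for $H$ via a short random walk, bounds the expected fraction of edges of $H$ that $A$ violates, and uses the expansion of the preprocessed graph together with a variance-style argument to relate this fraction to the fraction of $H^t$-edges that $A'$ violates. The analysis only cares whether an original edge is satisfied by the induced assignment, not about the polynomial form of the constraint, so it transfers verbatim from the Radhakrishnan--Sudan writeup; our contribution is purely the bookkeeping in the previous paragraphs verifying that the construction is a bona fide FGPR and respects the $m$-restricted and close-graph structure.
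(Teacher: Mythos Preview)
Your approach is essentially the same as the paper's: preprocess with expanders, take a Dinur-style power, track the polynomial representation of each constraint to verify the FGPR and closeness properties, and defer the gap-amplification inequality to the Radhakrishnan--Sudan writeup. The paper follows exactly this outline, using the random-walk product graph of \cite[Def.~5.13]{RS07} (walks that stop with probability $1/t$, truncated at $5t$ steps), and its only new content is the observation that the edge set and the polynomial \emph{identities} of each compound constraint depend solely on the factor graph, so that close inputs go to close outputs.

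There is, however, a genuine inaccuracy in your description of the power graph. You write that you install ``a new edge'' for every pair $(u,v)$ within distance $2t+1$, with a constraint built from ``a canonical collection of $u$-to-$v$ paths''. In both Dinur's and Radhakrishnan--Sudan's constructions the product graph is a \emph{multigraph}: there is one edge per walk, and the constraint on that edge checks only the $O(t)$ original edges along that particular walk together with equality of opinions at the visited vertices. Collapsing all walks between $u$ and $v$ into a single edge would (i) destroy the weighting that the amplification analysis relies on, and (ii) make your ``$O(t)$ original constraints per edge'' claim false, since the number of paths between a pair can be exponential in $t$. Once you index edges by walks rather than by endpoint pairs (as the paper does), both the $O(mt)$-restriction bound and the ``depends only on the factor graph'' claim follow immediately, and the RS07 analysis applies verbatim.
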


\begin{proof}
The proof closely follows the construction of the transformation in
Section 5 in \cite{RS07}.

Let the input of the transformation be a graph $G$. As in Lemma 5.3
in \cite{RS07}, a regular graph $G_{1}$ is created from the graph
$G$. Each vertex $u\in V$ is replaced by $\mathrm{d}_{u}$ ($u$'s
degree) vertices, with an $\left(\eta,d\right)$-expander embedded
on them. In addition, each of the new vertices is connected to one
of $u$'s neighbors. The constraints on the expanders' edges are equality
constraints (polynomials of degree 1). $\mathrm{UNSAT}\left(G_{1}\right)\geq\delta_{1}\mathrm{UNSAT}\left(G\right)$,
where $\delta_{1}$ depends on $d$ and $\eta$ (which are constants).
The proof for the last claim is contained in \cite{RS07}. Also, If
$G$ is satisfied it is immediate that $G_{1}$ can be satisfied.

As in Lemma 5.5 in \cite{RS07}, an expander $G_{2}$ is created from
the graph $G_{1}$ by superimposing an $\left(\eta,d\right)$-expander
on the graph $G_{1}$, with constraints that are always satisfied.
Let $d_{0}$ be the degree of $G_{1}$. Then, immediately, $G_{2}$
is a $\left(\eta,d+d_{0}\right)$ expander. By adding $d+d_{0}$ self
loops with constraints that are always satisfied (polynomials of degree
0), $G_{2}$ is a positive $\left(\eta,2d+2d_{0}\right)$-expander.
$\mathrm{UNSAT}\left(G_{2}\right)\geq\delta_{2}\mathrm{UNSAT}\left(G_{1}\right)$,
where $\delta_{2}$ depends on $d$ (if $\epsilon$ of the constraints
are not satisfied in $G_{2}$, then $\delta_{2}\epsilon$ of the constraints
are not satisfied in $G_{1}$, using the same assignment). Also, an
assignment that satisfies $G_{1}$ satisfies $G_{2}$.

Lastly, we transform $G_{2}=\left(V_{2},E_{2},\Sigma_{2},C_{2}\right)$
to $G_{3}=\left(V_{2},E,\Sigma_{2}^{\left(d+d_{0}\right)^{t}},C_{3}\right)$,
where $G_{3}$ is the product constraint graph of $G_{2}$, as defined
in definition 5.13 in \cite{RS07}. The value of $v\in V_{2}$ is
supposed to be the concatenation of all the values of vertices of
distance at most $t$ from $v$ in $G_{2}$ ($\left(d+d_{0}\right)^{t}$
is an upper bound on the number of vertices at distance at most $t$
from any vertex). That is, for an assignment $A$ and every two vertices
$v,u\in V_{2}$ there is $A_{u}\left(v\right)\in\Sigma_{2}\cup\left\{ \emptyset\right\} $
(where $\emptyset\notin\Sigma_{2}$), the opinion of $v$ on $u$.
$A_{u}\left(v\right)=\emptyset$ iff the distance between $u$ and
$v$ is more than $t$, and then we say that $v$ has no opinion on
$u$.

In order to define the edges, we use some arbitrary order on the $d$
neighbors of each vertex. The edges are intended to represent a simple
random walk on the graph starting at a random vertex and stopping
after each step with probability $\nicefrac{1}{t}$. If the random
walk does not stop after $5t+1$ steps we call it a null walk and
terminate it. There is one edge for each element of the set $V\times\left(\left\{ 1\ldots d\right\} \times\left\{ 1\ldots t\right\} \right)^{5t}$.
Each edge and its corresponding constraint are determined by a walk
defined by $\left\langle a,\left\langle i_{1},j_{1}\right\rangle \ldots\left\langle i_{5t},j_{5t}\right\rangle \right\rangle $
in the following way: The walk starts from $a=v_{0}$ (which is in
$V_{2}$). In step $k$ (starting from $k=1$), the walk moves to
the neighbor of $v_{k-1}$ numbered by $i_{k}$, and calls this vertex
$v_{k}$. If $j_{k}=1$, we stop (to simulate stopping with probability
$\nicefrac{1}{t}$), otherwise, we continue to step $k+1$, until
$k=5t$. We call the last vertex reached $b$. If the walk stops before
reaching step $5t+1$, a walk that is not null, the corresponding
constraint is checking that the opinions of $a$ and $b$ are the
same on the vertices on the path (including $a$ and $b$), if both
have opinions on the vertices and that the constraints of $G_{2}$
on the edges of the path are satisfied. Note that equality constraints
can be modeled as requiring that a linear polynomial is 0. Thus, this
creates an $\mathrm{O}\left(mt\right)$-restricted constraint. For
a null walk, the constraint is a self loop that is always satisfied.

Note that the transformations transformed close graphs to close graphs
and that the size of the graph increased by a factor that depends
only on $t$ (since degrees of expanders were chosen to be constants)

The proof for the last transformation increasing the unsatisfiability
is described throughout most of Section 5 of \cite{RS07}.
\end{proof}

\subsection{Alphabet Reduction}
\begin{definition}
The \emph{Hadamard code} of a binary string $x=x_{1}x_{2}\ldots x_{\ell}\in\left\{ 0,1\right\} ^{\ell}$
is the value of all linear functions $\left\{ 0,1\right\} ^{\ell}\to\left\{ 0,1\right\} $
on the bits of $x$. Given some arbitrary ordering on the linear functions,
the $i$'th bit of the Hadamard code of $x$ is the value of $x$
on the $i$'th function.

The \emph{quadratic code} of a binary string $x=x_{1}x_{2}\ldots x_{\ell}\in\left\{ 0,1\right\} ^{\ell}$
is the value of all homogeneous quadratic functions $\left\{ 0,1\right\} ^{\ell}\to\left\{ 0,1\right\} $
on the bits of $x$. Given some arbitrary ordering on the quadratic
functions, the $i$'th bit of the quadratic code of $x$ is the value
of $x$ on the $i$'th function. \end{definition}
\begin{lemma}
\label{AlphabetReduction}For every $k,m\in\mathbb{N}$ there is an
explicit $\left(\delta_{2}^{*},c_{2}\left(k,m\right)\right)$-FGPR
from close $m$-restricted constraint graphs with alphabet $\mathbb{F}_{2}^{k}$
to close 1-restricted constraint hypergraphs with alphabet $\mathbb{F}_{2}$.
Additionally, the constructed hypergraph has rank 4.\end{lemma}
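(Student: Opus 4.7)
The plan is to replace each vertex \(v\) of the input constraint graph \(G=(V,E,\mathbb{F}_2^k,C)\) by a block of fresh Boolean variables that is supposed to hold the Hadamard encoding \(H_v\) and the quadratic encoding \(Q_v\) of the assignment \(a(v)\in\mathbb{F}_2^k\), and to replace each edge \(e=(u,w)\) by an additional block \(Q_e\) supposed to hold the quadratic encoding of the joint pair \((a(u),a(w))\in\mathbb{F}_2^{2k}\). Every constraint of the output hypergraph will be a single quadratic \(\mathbb{F}_2\)-equation in at most four of these Boolean variables, hence 1-restricted of rank at most \(4\).

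Three kinds of constraints are introduced. First, standard code tests: linearity of each \(H_v\) via triples \(H_v(f_1)+H_v(f_2)+H_v(f_1+f_2)=0\), analogous quadraticness tests on \(Q_v\) and \(Q_e\), and Hadamard-to-quadratic consistency \(H_v(f_1)H_v(f_2)+Q_v(f_1 f_2)=0\). Second, edge--vertex consistency tests: for each quadratic \(g\) on \(\mathbb{F}_2^{2k}\) decomposed as \(g=g_u+g_w+f_u f_w\) with \(g_u,g_w\) quadratic on the respective vertex bits and \(f_u,f_w\) linear, the single equation
\[
Q_e(g)+Q_u(g_u)+Q_w(g_w)+H_u(f_u)H_w(f_w)=0,
\]
which is quadratic and involves four variables (this is where rank \(4\) enters). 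Third, for each polynomial \(P_i^e\) in the representation of the edge constraint (of degree \(\le 2\) in the \(2k\) Boolean bits of the joint assignment), the single equation \(Q_e(P_i^e)+b_i^e=0\), where \(b_i^e=0\) in the base instance. Only these last equations carry the values \(b_i^e\); all code tests and consistency tests depend only on the factor graph, so two close input graphs are sent to two close output hypergraphs, matching the required definition.

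Completeness is immediate: a satisfying assignment to \(G\) induces genuine Hadamard and quadratic encodings that satisfy every constraint above. Soundness follows from the self-correction and composition analysis of Hadamard and quadratic function testing of Bellare, Coppersmith, H{\aa}stad, Kiwi and Sudan~\cite{BCHKS95}: if some assignment violates at most a \(\delta_2^*\epsilon\)-fraction of the output constraints, then with constant probability each \(H_v\) and \(Q_v\) is close to the honest encoding of a common \(a_v\in\mathbb{F}_2^k\), each \(Q_e\) is close to the honest joint encoding of \((a_u,a_w)\), and a \((1-\epsilon)\)-fraction of the evaluation equations \(P_i^e(a_u,a_w)=b_i^e\) actually hold; these correspond to an assignment satisfying a \((1-\epsilon)\)-fraction of the original edge constraints. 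The number of new variables per vertex or edge is \(2^{O(k)}\), the number of new constraints per edge is \(m\cdot 2^{O(k)}\), and all ranks are at most \(4\), yielding an explicit \((\delta_2^*, c_2(k,m))\)-FGPR.

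The main obstacle I expect is making the soundness constant \(\delta_2^*\) genuinely independent of both \(k\) and \(m\). The standard code-testing analysis loses constant factors whenever one composes linearity, quadraticness, Hadamard--quadratic consistency, and edge--vertex decomposition tests, and one has to verify that the decomposition tests cover a large enough set of quadratics on \(\mathbb{F}_2^{2k}\) so that approximate list-decoding of \(Q_e\) pins down a pair \((a_u,a_w)\) consistent with the list decodings of the two vertex codes. The parameter \(m\) should only dilute the soundness of the third step by a factor \(1/m\) that is absorbed into the size blow-up \(c_2(k,m)\) rather than into \(\delta_2^*\); verifying this quantitative separation is what the analysis has to deliver.
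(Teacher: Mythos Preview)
Your overall architecture is reasonable and differs from the paper's in interesting ways (you place a quadratic code at each vertex and omit the edge Hadamard code, whereas the paper does the opposite), but there is a genuine gap in the evaluation step. Your constraint $Q_e(P_i^e)+b_i^e=0$ queries $Q_e$ at the $m$ \emph{fixed} locations $P_1^e,\ldots,P_m^e$. An adversary can take the honest encoding of a joint assignment $(a_u,a_w)$ that \emph{violates} the edge constraint of $G$ and then flip $Q_e$ at exactly those $m$ points so that all the evaluation equations hold. Since the quadratic-code table for an edge has $2^{\Theta(k^2)}$ entries, corrupting $m$ of them changes the pass rate of your linearity and consistency tests by at most $O(m\cdot 2^{-\Theta(k^2)})$; the output hypergraph becomes essentially satisfiable regardless of $\mathrm{UNSAT}(G)$. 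So soundness is not ``diluted by $1/m$'' as you suggest in your last paragraph, it is destroyed outright. The missing idea is self-correction at the evaluation point. The paper's type-7 test replaces your rank-1 query $Q_e(P_i^e)$ by the self-corrected pair $A(e(q_1+P))+A(e(q_1))$ for a uniformly random quadratic $q_1$, and further replaces the individual $P_i^e$ by a random $\mathbb{F}_2$-combination $P=\sum_i w_i P_i^e$ over $w\in\{0,1\}^m$. The random shift $q_1$ forces the adversary to corrupt a constant fraction of $Q_e$ in order to cheat, and the random $w$ makes the evaluation failure probability independent of $m$; this is exactly what keeps $\delta_2^*$ a universal constant.

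A secondary issue: your consistency test presupposes the decomposition $g=g_u+g_w+f_u f_w$, but a general quadratic on $\mathbb{F}_2^{2k}$ need not have a rank-one bilinear cross term, so not every $g$ is hit. One can indeed recover $\tilde{Q}_e$ on all quadratics via the linearity test on $Q_e$, but this has to be argued explicitly and again needs self-correction so that you are comparing the decoded $\tilde{Q}_e$, not the raw (possibly corrupted) $Q_e$, against the vertex codes. The paper avoids this detour by introducing an edge Hadamard table $e(L)$ and splitting consistency into a linear projection test $A(u(L_1))+A(v(L_2))=A(e(L))$ (type~5) together with a Hadamard-to-quadratic test on the edge alone (type~6, with a random shift). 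Finally, your size bound is off: the quadratic code on $2k$ bits already has $2^{\Theta(k^2)}$ entries, so $c_2(k,m)=2^{O(k^2+m)}$, not $2^{O(k)}$.
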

\begin{proof}
Given an $m$-restricted constraint graph $G=\left(V,E,\mathbb{F}_{2}^{k},C\right)$,
we construct a 1-restricted constraint hypergraph $H=\left(V',E',\mathbb{F}_{2},C'\right)$.
Following the construction of Section 6.3 in \cite{RS07}, for every
$v\in V$, and linear function $L:\left\{ 0,1\right\} ^{k}\to\left\{ 0,1\right\} $
create a vertex $v\left(L\right)$ in $V'$. For every $e\in E$,
for every homogeneous quadratic function $q:\left\{ 0,1\right\} ^{2k}\to\left\{ 0,1\right\} $
create a vertex $e\left(q\right)$ in $V'$. For every $e\in E$,
for every linear function $L:\left\{ 0,1\right\} ^{2k}\to\left\{ 0,1\right\} $
create a vertex $e\left(L\right)$ in $V'$. The value of the assignment
$A$ on vertex $v$ is referred to as $A\left(v\right)$.

For every $e=\left(u,v\right)\in E$, $L_{1},L_{2}:\left\{ 0,1\right\} ^{k}\to\left\{ 0,1\right\} $,
$L_{3},L_{4}:\left\{ 0,1\right\} ^{2k}\to\left\{ 0,1\right\} $ linear
functions, $q_{1},q_{2}:\left\{ 0,1\right\} ^{2k}\to\left\{ 0,1\right\} $
homogeneous quadratic functions, $w\in\left\{ 0,1\right\} ^{m}$ there
will be seven constraints and corresponding edges (with multiplicity):

\begin{enumerate}

\item A constraint that is satisfied iff $A\left(u\left(L_{1}\right)\right)+A\left(u\left(L_{2}\right)\right)=A\left(u\left(L_{1}+L_{2}\right)\right)$.

\item A constraint that is satisfied iff $A\left(v\left(L_{1}\right)\right)+A\left(v\left(L_{2}\right)\right)=A\left(v\left(L_{1}+L_{2}\right)\right)$.

\item A constraint that is satisfied iff $A\left(e\left(L_{3}\right)\right)+A\left(e\left(L_{4}\right)\right)=A\left(e\left(L_{3}+L_{4}\right)\right)$.

\item A constraint that is satisfied iff $A\left(e\left(q_{1}\right)\right)+A\left(e\left(q_{2}\right)\right)=A\left(e\left(q_{1}+q_{2}\right)\right)$.

\item Let $L$ be the linear function given by $L\left(X,Y\right)=L_{1}\left(X\right)+L_{2}\left(Y\right)$.
There is a constraint that is satisfied iff $A\left(u\left(L_{1}\right)\right)+A\left(v\left(L_{2}\right)\right)=A\left(e\left(L\right)\right)$.

\item A constraint that is satisfied iff $A\left(e\left(L_{3}\right)\right)A\left(e\left(L_{4}\right)\right)=A\left(e\left(q_{1}+L_{3}L_{4}\right)\right)-A\left(e\left(q_{1}\right)\right)$.

\item Let $P$ be a homogeneous degree two polynomial and $b\in\left\{ 0,1\right\} $
such that $P=\Sigma w_{i}P_{i}+b$, where $\left\{ P_{i}\right\} $
is the set of polynomials associated with $e$. There is a constraint
that is satisfied iff $A\left(e\left(q_{1}+P\right)\right)-A\left(q_{1}\right)-b=0$.

\end{enumerate}

There are $2^{\ell}$ linear functions on $\ell$ bits. There are
$2^{\frac{\ell\left(\ell-1\right)}{2}}$ quadratic functions on $\ell$
bits. Thus, there are $2^{k}\left|V\right|+\left(2^{2k}+2^{k\left(2k-1\right)}\right)\left|E\right|$
vertices and $7\left|E\right|\cdot2^{m+6k+2k\left(2k-1\right)}$ edges.
Hence $\left|H\right|\leq2^{O\left(k^{2}+m\right)}\left|G\right|$.

The satisfaction of every hyperedge depends only on the value of at
most 4 vertices.

Given an assignment $A$ satisfying $G$, there is an assignment $A'$
satisfying $H$. For every $i,v$, assign the $i$'th bit of the Hadamard
code of $A\left(v\right)$ to the vertex $v\left(L_{i}\right)$. Note
that this satisfies all constraints of type 1 and 2, since the Hadamard
code is linear. For every $i,e=\left(u,v\right)$ assign the $i$'th
bit of the Hadamard code of $A\left(u\right)\circ A\left(v\right)$
(the concatenation of the binary strings) to the vertex $e\left(L_{i}\right)$.
Note that this satisfies all constraints of type 3. Assign the $i$'th
bit of the quadratic code of $A\left(u\right)\circ A\left(v\right)$
to the vertex $e\left(q_{i}\right)$. A quadratic code of any string
$x$ on the vertices $e\left(q\right)$ (for all quadratic functions
$q$), will pass the constraints of type 6, if the Hadamard code of
$x$ is on the vertices of $e\left(L\right)$ (for all linear functions
$L$), and this is the case for $A'$. The constraints of type 5 check
the consistency of the Hadamard code between vertices of $e\left(L\right)$
and vertices corresponding to $u$ and $v$, and they are consistent
in $A'$. Lastly, constraints of type 7 are satisfied, since the polynomials
associated with edges of $G$ are all 0, when given assignment $A$.
Since the code on $e\left(q\right)$ is linear, the check becomes
$A'\left(e\left(P\right)\right)-b$, which, by definition of $A'$,
is the value of $P-b$ on the assignment $A$, which is a sum of polynomials
that are all 0.

If two constraint graphs $G$ and $G^{*}$ are close, then their transformation,$H$
and $H^{*}$, only differ in the constraints of type 7, in the constant
of the associated polynomials (actually linear functions).

The proof that the transformation decreases the fraction of unsatisfiable
constraints by at most a constant factor is the same as in Lemma 6.11
in \cite{RS07}.
\end{proof}

\subsection{Composition}

Firstly, we transform the hypergraph back to a 3CNF formula.
\begin{lemma}
\label{Hypergraph2Formula}There is an explicit $\left(\delta_{3}^{*}\left(h\right),c_{3}\left(h\right)\right)$-FGPR
from close 1-restricted constraint hypergraphs with alphabet $\mathbb{F}_{2}$
to 3-SAT, where $h$ the rank of the hypergraph.\end{lemma}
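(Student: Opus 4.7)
The plan is to replace each hyperedge's degree-two polynomial constraint by a small 3-CNF gadget built from AND and XOR sub-gadgets, arranged so that the entire clause topology depends only on the polynomial's monomial set — which is fixed by the factor graph of the hypergraph — while the instance-specific additive constant $b^e$ controls only the polarity of one unit clause per hyperedge.

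First I would fix the gadget template. A 1-restricted constraint on an edge of rank at most $h$ is a single degree-two polynomial over $\mathbb{F}_2$ in at most $h$ variables, with at most ${h \choose 2}+h$ non-constant monomials (in $\mathbb{F}_2$ one has $x_i^2=x_i$). For each hyperedge $e$ and each quadratic monomial $x_ix_j$ appearing in $P^e$ I would introduce a fresh auxiliary variable $y^e_{ij}$ and four 3-clauses enforcing $y^e_{ij}\leftrightarrow x_i\wedge x_j$. I would then cascade all the monomial values (the $y^e_{ij}$'s together with the linear variables of $P^e$) through a fixed binary tree of XOR sub-gadgets, each of which introduces one fresh variable $z$ and four 3-clauses enforcing $z\leftrightarrow a\oplus b$. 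Let $w^e$ be the output of this XOR chain; the constraint $P^e+b^e=0$ is then imposed by a single unit clause, either $\overline{w^e}$ or $w^e$, according to the parity $b^e$ plus the (fixed) constant term of $P^e$.

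Next I would verify the FGPR properties. Two close hypergraphs share the same monomial sets on every edge by definition of closeness, so the AND gadgets, the XOR trees, and the identities of all variables occurring in each clause are byte-for-byte identical, and the two resulting 3-CNFs have the same factor graph; only the polarity of the (at most) $|E|$ terminal unit clauses can differ. Completeness is immediate: a satisfying assignment to the hypergraph extends to all auxiliary variables by setting $y^e_{ij}$ to $x_i\wedge x_j$ and filling in the XOR intermediates. For soundness I would use a gadget-preservation argument: from any assignment to the 3-CNF with a $\beta$-fraction of unsatisfied clauses, restrict to the original variables; for each hyperedge whose polynomial constraint is then violated, at least one clause in its $O(h^2)$-clause gadget must be unsatisfied. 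Hence $\mathrm{UNSAT}(H)\le c_3(h)\cdot\mathrm{UNSAT}(\varphi)$ with $c_3(h)=O(h^2)$, giving $\delta_3^*(h)=\Omega(1/h^2)$.

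The only step I anticipate as genuinely needing care is ensuring that every dependence on $b^e$ lives in the polarity of a single clause. Once the AND and XOR gadgets are chosen with a completely fixed literal pattern (no polarity that depends on the identity of the monomial or on whether a given monomial is present in other edges), the whole $b^e$-dependence can be pushed onto the terminal unit clause, which is exactly what makes the reduction factor-graph preserving. The remaining ingredients — the $O(h^2)$ polynomial blow-up and the $1/O(h^2)$ loss in the UNSAT parameter — are then routine.
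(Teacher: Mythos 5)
Your proposal is correct, and it follows the same overarching strategy as the paper's proof: introduce one fresh output variable per hyperedge (your $w^e$, the paper's $w_e$), express the requirement that this variable equal the homogeneous part $P^e$ of the constraint by a fixed set of clauses that depends only on the monomial structure of $P^e$, and then push the entire dependence on the instance-specific constant $b^e$ into the polarity of a single unit clause on the fresh variable. Where you differ is inside the gadget. The paper treats $P^e(u_1,\dots,u_h)+w_e$ as an opaque Boolean function on $h+1$ bits and writes down its CNF directly by enumerating the (at most $2^h$) falsifying assignments, giving $c_3(h)=O(2^h)$; you instead decompose the polynomial structurally, introducing one auxiliary AND-variable per quadratic monomial and chaining everything through a fixed XOR tree, giving $c_3(h)=O(h^2)$. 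Since both hypergraphs being compared are close (and hence share the same monomial support on every edge), either gadget template is determined by the factor graph alone, so both yield an FGPR. Your compositional gadget is polynomial in $h$ rather than exponential; in the application here $h=4$ is a constant so this makes no difference, but your version would remain polynomial-size if one ever needed to run the lemma at unbounded rank. Your soundness reasoning (one violated hyperedge forces at least one violated clause in its $O(h^2)$-clause gadget, yielding $\delta_3^*(h)=\Omega(1/h^2)$) and the size bound are both correct.
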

\begin{proof}
Given a 1-restricted constraint hypergraph $H=\left(V,E,\Sigma,C\right)$,
a 3-CNF formula $\varphi$ is defined.

For every vertex $v\in V$, there is a variable $v$. For every edge
$e$, there is a variable $w_{e}$.

For every edge $e$, the corresponding constraint $c_{e}$ is satisfied
iff the quadratic polynomial $P^{e}+b^{e}$ is evaluated to 0, where
$P^{e}$ is homogeneous. There is a set of at most $2^{h}$ 3-CNF
clauses that is evaluated to true iff $P^{e}\left(u_{1},u_{2},\cdots,u_{h}\right)+w_{e}=0$.
Adding to this set the clause $w_{e}$ if $b_{i}=1$ and the clause
$\bar{w_{e}}$ otherwise, gives a set of at most $2^{h}+1$ clauses
that are all satisfied iff the constraint $c_{e}$ is satisfied (using
the same assignment, omitting the variables of the form $w_{e}$).

Thus, we have that $\left|\varphi\right|\leq\left(2^{h}+1\right)\left|H\right|$.
Also, if $\mathrm{\mathrm{UNSAT}}\left(\varphi\right)\leq\epsilon$,
then $\mathrm{\mathrm{UNSAT}}\left(H\right)\leq\left(2^{h}+1\right)\epsilon$.
Finally, it is immediate that transforming close 1-restricted constraint
hypergraph the resulting 3CNF formulas all have the same factor graph.
\end{proof}
Now we can compose all transformation to get the required FGPR.
\begin{theorem}
\label{DoublingGap}There is an explicit $\left(2,c_{4}\right)$-FGPR
from 3-SAT to 3-SAT.\end{theorem}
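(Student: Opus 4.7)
The plan is to cascade the four FGPRs established in the preceding lemmas, choosing the expansion parameter $t$ from Theorem~\ref{GapAmplification} large enough so that the overall gap is amplified by a factor of at least~$2$.

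Starting from an arbitrary 3-SAT instance $\varphi$, I would first apply Lemma~\ref{Formula2Graph} to obtain a 2-restricted constraint graph $G_{1}$ over alphabet $\mathbb{F}_{2}^{3}$, losing a factor of $1/4$ in the gap and paying a factor of $6$ in the size; crucially, two instances with the same factor graph produce \emph{close} 2-restricted graphs. Next, I would invoke Theorem~\ref{GapAmplification} with a value of $t$ to be fixed below, obtaining an $\mathrm{O}(t)$-restricted constraint graph $G_{2}$ over $\mathbb{F}_{2}^{s(3,t)}$, amplifying the gap by at least $\alpha t$ at a size cost of $c_{1}(t)$, and preserving closeness. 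I would then apply Lemma~\ref{AlphabetReduction} to produce a $1$-restricted constraint hypergraph $H$ of rank $4$ over $\mathbb{F}_{2}$, with constant gap factor $\delta_{2}^{*}$ and size blowup $c_{2}(s(3,t),\mathrm{O}(t))$, again preserving closeness. Finally, I would apply Lemma~\ref{Hypergraph2Formula} (with $h=4$) to produce a 3-CNF formula $\varphi'$, losing a factor $\delta_{3}^{*}(4)$ and paying $c_{3}(4)$ in size.

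The composition is an FGPR because each successive step either preserves factor graphs between ordinary CSPs (Lemmas~\ref{Formula2Graph} and~\ref{Hypergraph2Formula}) or maps close $m$-restricted constraint (hyper)graphs to close ones (Theorem~\ref{GapAmplification} and Lemma~\ref{AlphabetReduction}), which is precisely the invariant the intermediate reductions need for their outputs to be factor-graph preserving. The overall gap multiplier is
\[
\tfrac{1}{4}\cdot \alpha t \cdot \delta_{2}^{*} \cdot \delta_{3}^{*}(4),
\]
so I fix $t$ to be the smallest positive integer for which this product is at least $2$; since $\alpha,\delta_{2}^{*},\delta_{3}^{*}(4)$ are absolute positive constants, such a $t$ exists and is itself an absolute constant. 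With $t$ fixed, the size blowup
\[
c_{4} \;=\; 6\cdot c_{1}(t)\cdot c_{2}\bigl(s(3,t),\mathrm{O}(t)\bigr)\cdot c_{3}(4)
\]
is a constant, yielding the desired $(2,c_{4})$-FGPR.

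The only real subtlety, and the place I would spend most of the write-up, is bookkeeping the $\xi$ thresholds from the FGPR definition across the composition: each intermediate step only guarantees the multiplicative gap improvement up to $\min\{\epsilon,\xi\}$, so I need to verify that along the chain the running unsatisfiability never saturates at its own $\xi$ before the final step. This is handled by noting that all the $\xi$'s in the composition are bounded below by a fixed positive constant depending only on $t$, so I may assume that the input gap $\epsilon$ is at most that constant (otherwise one output clause of $\varphi'$ being unsatisfied already beats the bound and the conclusion is immediate), after which the multiplicative behavior propagates cleanly to yield the factor of $2$.
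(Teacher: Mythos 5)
Your proposal is correct and follows essentially the same route as the paper: compose the FGPRs of Lemma~\ref{Formula2Graph}, Theorem~\ref{GapAmplification}, Lemma~\ref{AlphabetReduction}, and Lemma~\ref{Hypergraph2Formula}, observe that the composed gap factor is $\tfrac{1}{4}\alpha t\,\delta_{2}^{*}\,\delta_{3}^{*}(4)$ with size blowup $6\,c_{1}(t)\,c_{2}(\cdot)\,c_{3}(4)$, and fix $t$ to be a constant large enough that the gap factor reaches $2$. The paper treats the composability of the FGPRs (including the $\xi$-threshold bookkeeping you flag) as routine verification, so your discussion there is a welcome bit of extra care rather than a departure in approach.
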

\begin{proof}
Using lemma \ref{Formula2Graph}, theorem \ref{GapAmplification},
lemma \ref{AlphabetReduction}, and lemma \ref{Hypergraph2Formula},
there is a \[
\left(\frac{1}{4}\alpha t\delta_{2}^{*}\delta_{3}^{*}\left(4\right),6c_{1}\left(t\right)c_{2}\left(2^{s\left(3,t\right)},\mathrm{O}\left(t\right)\right)c_{3}\left(4\right)\right)-FGPR\]
 from 3-SAT to 3-SAT formulas, for all $t$ (it is easy to verify
that the composition of these FGPRs is indeed an FGPR). Specifically,
there is a constant $t$ to get a $\left(2,c_{4}\left(t\right)\right)$-FGPR.
\end{proof}
Starting with a universal factor graph for 3-SAT, $\mathrm{O}\left(\log n\right)$
repetitions of Theorem \ref{DoublingGap} proves Theorem \ref{thm:APX}.

\section{Threshold-Universal Factor Graphs}\label{sec:TUFG}

\subsection{Oblivious folding of the long code - an overview}

A major ingredient in tight hardness of approximation results is the
\emph{long code}, introduced in \cite{BGS98}. We present here an overview of the difficulties involved in adapting hardness proofs based on the long code to the oblivious setting, and of our approach for handling these difficulties. A superficial familiarity with previous work should suffice in order to follow this overview -- there is no need to know concepts related to the analysis of the long code, such as dictatorship tests and Fourier analysis.

For a given value of
$k$, the long code replaces a vector $x$ of $k$ original Boolean variables
by a vector $z$ of $2^{2^{k}}$ new Boolean variables. The method of doing this can be visualized as follows.
Consider a $2^{k}$ by $2^{2^{k}}$ matrix LC (for \emph{long code}).
The rows are indexed by all $2^{k}$ possible assignments to the
$x$ variables. The columns are indexed by all possible Boolean functions.
Namely, the column vectors are all possible $2^{2^{k}}$ truth-tables
for Boolean functions on $k$ variables (equivalently, all possible
column vectors of dimension $2^{k}$). Every row of the matrix is
then the \emph{long code} of its corresponding assignment. The $z$
variables are intended to correspond to the columns, and their values
(as a vector) are intended to be equal to one row in the matrix (one
long code). A \emph{verifier} may perform various tests to verify
(in a probabilistic sense) that this is indeed the case. Moreover,
the verifier would like this row to correspond to an assignment to
the $x$ variables that satisfies some predicate $h$. The concept
of \emph{folding} assists in these tests.

\emph{Folding over }\textbf{\emph{true}}. Columns in LC can be paired,
where each column (say $z_i$) is paired with the column (say $z_j$) that is its complement.
If $z$ is indeed a long code, then in each such pair, one of the two corresponding variables is redundant and hence is dropped.
For example, to read the value of $z_{j}$, read the value of $z_{i}$
and flip this value. After folding over true, half the number of $z$ variables remain.

\emph{Folding over $h$.} Consider the truth table for $h$ (as a
column vector). Only certain rows of LC have a value of~1 in this
vector, and these rows correspond to the assignments that satisfy
the predicate $h$. In these rows, the following holds. Consider an
arbitrary variable $z_{i}$ whose column corresponds to the function
$f$. Then its value is exactly the complement of the variable $z_{j}$
whose column corresponds to the function $f+h$ (addition modulo~2).
Hence again columns can be paired (columns that differ by $h$), and
one member from each pair can be dropped. After folding over $h$, half the number of $z$ variables remain. We remark that folding over true is a special case of folding over $h$, for a trivial choice of $h$ as the predicate that always accepts (its truth table is all~1).

\emph{Conditioning over $h$.} Consider the rows correspond to the
assignments that satisfy the predicate $h$, and let $r$ denote their number. The columns restricted
to these rows can be partitioned into $2^r$ equivalence classes, based on equality.
Only one member from each equivalence class needs to be retained, as its
$z$-variable determines the value of the corresponding $z$ variables
for all other columns in its equivalence class. Hence after conditioning, the number of $z$ variables that would remain is $2^r$.

Hardness of approximation results involve {\em long code tests} that query some of the $z$ variables and determine whether their values are consistent with the assumption that all $z$ variables form a valid long code. Basically, there are two versions of long code tests. One version (as in Bellare et al \cite{BGS98}) involves {\em decoding to long codes}, in which if the $z$ variables pass the test (with high probability) then the conclusion is that there are long codes that are close (in Hamming distance) to the vector of values of the $z$ variables. (There might be more than one such long code, depending on the error probability that we allow in the long code test, but their number is small.)
The other version which is the one that leads to tight hardness of approximation results (as in Hastad \cite{H97}) involves {\em decoding to linear combinations}, in which if the $z$ variables pass the test (with high probability) then the conclusion is that there are words that are close to the vector of values of the $z$ variables, where these words are linear combinations are over a small number of long codes words. (Also here there might be several such words that are linear combinations.) Folding over $h$ (and also conditioning over $h$) ensures that every decoded long code in the first version corresponds to an assignment to the $x$ variables that satisfies $h$. However, for the second version, there is a distinction between folding and conditioning. Conditioning over $h$ ensures that every decoded linear combination is over long code words that correspond to assignments to the $x$ variables that satisfies $h$. Folding over $h$ only ensures that at least one long code word does.

We remark that on top of folding (or conditioning) over $h$, long code tests employ also folding over true (which regardless of other folding or conditioning operations, drops half of the previously remaining $z$ variables). This is relevant to our discussion, as it illustrates the principle that two $z$ variables can be deemed equivalent not only if they have equal values in all long codes of interest (e.g., all long codes that correspond to assignments to the $x$ variables that satisfy the predicate $h$), but also if they have complementary values in all these long codes.

In order to obtain universal factor graphs, we need that even
if $h$ is changed due to a change of polarity of variables (e.g.,
from $x_{1}\vee x_{2}\vee x_{3}$ to $x_{1}\vee\bar{x}_{2}\vee x_{3}$),
the factor graph does not change. However, changing $h$ changes the
pairing in the folding and also changes the equivalence classes in
the conditioning, and hence changes which $z$ variables remain. As a consequence,
the resulting factor graphs change.

Our solution to this problem is through the notion of an \emph{oblivious
folding}.  Consider for example a predicate $h$ corresponding to clause $C_j  = x_{1} \vee x_{2} \vee x_{3}$. The long code associated with an assignment to the three variables $x_{1}, x_{2}, x_{3}$ will have $2^{2^3}$ variables ($z$ variables). Folding over $h$ will pair some of these variables. If polarities in $C_j$ change (say, to $x_{1}\vee\bar{x}_{2}\vee x_{3}$) we get a different predicate $h'$ that leads to a different pairing. To overcome this problem, oblivious folding introduces auxiliary variables and corresponding new predicates. In the example above, this entails introducing three fresh auxiliary variables $y_{j1}, y_{j2}, y_{j3}$ and replacing the clause $C_j$ by the conjunction of four constraints, one that we call here a {\em shadow constraint} $y_{j1} \vee  y_{j2}\vee y_{j3}$, and three {\em equality constraints} $y_{j1} = x_1$,  $y_{j2} = x_2$, $y_{j3} = x_3$.
If polarities in $C_j$ change as above, the shadow constraint remains unchanged and only the equality constraints change
(to $y_{j2} \not= x_2$ in our example). Note that an equality constraint is simply negated when the corresponding $x$ variable is negated (equality changes to inequality).
Now consider a long code associated with an assignment to six variables (namely, to the $x$ variables and the auxiliary $y$ variables). The number of $z$ variables is now $2^{2^6}$.
Consider folding over an equality constraint $h'$. If the constraint is negated to give an inequality constraint $\bar{h'}$, then
rather than change the pairing in the folding (to be folding over $\bar{h'}$ rather than over $h'$), one can instead keep
the same folding as for the original $h'$, but flip the nature of
the relation between the two $z$ variables that form a pair (instead of
requiring them to be different, requiring them to be the same).

In summary, our oblivious folding is performed as follows. Given a set of $x$ variables and a predicate $h$, we
add auxiliary $y$ variables and express $h$ as the conjunction of a shadow predicate over the $y$ variables and equality constraints between the $x$ and $y$ variables. We consider the long code with respect to the combination of all variables. Thereafter we fold
over all the new predicates one by one (the order does not matter),
eventually giving a partition of the $z$ variables into equivalent
classes. In fact, for the shadow constraint we could use conditioning rather than folding -- the main aspect is that for each of the equality constraints we use folding. If the polarity of an $x$ variable changes, the partition of $z$ variables does not change. The only change is the relation among $z$ variables within the partition -- two variables that were previously deemed equal might now be considered as negations of each other. As a result, when an $x$ variable is negated the factor graph over the $z$ variables remains unchanged, and only polarities of some of the $z$ variables change.

Our oblivious folding can be used in conjunction with the proof of Bellare et al~\cite{BGS98}
(after proper modifications), because our equivalence classes
capture all equivalences used by (standard) folding over $h$. We do not know whether our oblivious folding can be used in conjunction
with the proof of Hastad~\cite{H97}, because oblivious folding does not capture all equivalences captured by conditioning. For example, given a long code (of length $2^{2^t}$) for $t$ variables, conditioning over an equality constraint between two variables creates $2^{2^{t-1}}$ equivalence classes, whereas folding (which is an oblivious folding, since we only have an equality constraint) creates $2^{2^t - 1}$ equivalence classes. As explained earlier, when decoding to a linear combination of long codes, we know that for each equality constraint at least one long code in the linear combination satisfies it. However, several equality constraints are introduced by our oblivious folding, and it might be the case that none of the long codes in the linear combination satisfy all of them.


\subsection{Proof of Theorem~\ref{thm:threshold1}}

Recall that the FGPR in Lemma \ref{AlphabetReduction} creates a 1-restricted
constraint hypergraph of rank 4, such that the degree of every vertex
depends only on the size of the alphabet and the degree of the constraint
graph. Additionally, the FGPR in Theorem \ref{GapAmplification} creates
a bounded degree graph with alphabet that depends only on $t$. Thus,
it is possible to use the series of FGPRs in Theorem \ref{DoublingGap}
without the FGPR in Lemma \ref{Hypergraph2Formula}, to get a universal
factor graph for 1-restricted constraint hypergraphs of rank 4, such
that the degree of every vertex is bounded by a universal constant.

We now follow the proof of Bellare et al.~\cite{BGS98} (with the modification of
the folding) to show hardness of approximation within a factor of
$\frac{77}{80}$.

Given some fixed 1-restricted constraint hypergraph $H$ with alphabet
$\left\{ 0,1\right\} $, rank 4 and bounded degree, consider the two
prover game $H^{u}$: The outer verifier picks uniformly at random
a set $C$ of $u$ hyperedges (constraints). For a constraint $c$,
let $B_{c}$ be the set of bits the value of $c$ depends on. For
every $c\in C$, the outer verifier picks uniformly at random a bit
in $B_{c}$. Call the set of selected bits $B$. Then, the outer verifier
gives $P_{1}$ the set $C$ and gives $P_{2}$ the set $B$. $P_{1}$
is expected to return a satisfying assignment to each edge. $P_{2}$
is expected to return an assignment to the set of bits, consistent
with the assignment $P_{1}$ gives (the same assignment for the same
bits). The outer verifier accepts if both conditions on the responses
of the provers hold.

For $u=1$, it is easy to see that the answers of $P_{2}$ define
an assignment, $a$, for the graph and the answers of $P_{1}$ define
an assignment for each edge, separately. If an assignment for some
edge is inconsistent with $a$ and the verifier chose this edge, the
probability that the verifier finds the inconsistency is at least
$\frac{1}{4}$ (the probability that $P_{2}$ is asked to reveal an
inconsistent bit). Thus, if $\mathrm{UNSAT}\left(G\right)>\epsilon$,
the verifier rejects with probability at least $\frac{\epsilon}{4}$.
It is immediate that if $\mathrm{UNSAT}\left(G\right)=0$, there is
an assignment that makes the verifier always accept.

Using the Parallel Repetition Theorem \cite{R98}, for every $\epsilon$,
if the original graph is at most $1-\epsilon$ satisfiable, there
is $c_{\epsilon}>0$ such that the verifier $V^{u}$ accepts with
probability at most $c_{\epsilon}^{u}$, for all $u>0$. Again, if
the graph is satisfiable, the verifier (after parallel repetition)
can be made to always accept.
\begin{definition}
The string $A\in\left\{ 0,1\right\} ^{\mathcal{F}_{n}}$ (where $\mathcal{F}_{n}$
is the set of all functions from $n$ bits to 1 bit) is the \emph{long
code} of a string $x\in\left\{ 0,1\right\} ^{n}$ if for all $f\in\mathcal{F}_{n}$,
$A_{f}=f\left(x\right)$.
\end{definition}

\begin{definition}
$A\in\left\{ 0,1\right\} ^{\mathcal{F}_{n}}$ is said to be \emph{folded
over $\left(h,b\right)$} ($h\in\mathcal{F}_{n},b\in\left\{ 0,1\right\} $),
if for all $f\in\mathcal{F}_{n}$ $A_{f+h}=A_{f}+b$.

Let $h_{1},\ldots,h_{k}$ be linearly independent functions and $b_{1},\ldots,b_{k}$
be bits. Let $\prec$ be some total ordering of $\mathcal{F}_{n}$.
Let $\mu\left(f\right)$ be the minimal function among $\left\{ f+\sum\sigma_{i}h_{i}|\sigma_{i}\in\left\{ 0,1\right\} \right\} $,
and let $\sigma_{i}^{f}$ be such that $\mu\left(f\right)=f+\sum\sigma_{i}^{f}h_{i}$.
We say that $B\in\left\{ 0,1\right\} ^{\mathcal{F}_{n}}$ is the \emph{folding
over} $\left(h_{1},\ldots,h_{k}\right),\left(b_{1},\ldots,b_{k}\right)$
of $A\in\left\{ 0,1\right\} ^{\mathcal{F}_{n}}$ if $B_{f}=A_{\mu\left(f\right)}+\sum\sigma_{i}^{f}b_{i}$.
Note that $B$ is folded over \emph{$\left(h_{i},b_{i}\right)$.}
\end{definition}

It is possible  to define folding over a set of linearly dependent
functions, provided that the values that the values of the bits $\{b_i\}$ are consistent with the linear dependencies over the functions $\{h_i\}$.  We do not present such a definition, because it is not required for our proofs.


For our usage of folding, it suffices to fold over linearly independent
functions.

In order to check the validity and consistency of the assignment using
disjunctive clauses with 3 literals each, the provers are expected
to give the long code for the assignment for the variables they are
given (the sets $C$ and $B$ as before), and an inner verifier will
check their answer. Let $A$ be the answer of $P_{1}$, $D$ the answer
of $P_{2}$. Let $h_{1},\ldots,h_{u}$ be the homogeneous part of
the constraints in $C$ (the constraints are degree two polynomials).
The verifier will access the folding of $A$ over $\left(1,h_{1},\ldots,h_{u}\right),\left(1,b_{1},\ldots,b_{u}\right),$where
$b_{1},\ldots,b_{u}$ are the constants of the respective constraints.
Since every vertex appears a bounded number of times, independent
of the size of the graph, for large enough hypergraphs the functions
will be linearly independent with very high probability. If they are
dependent, we can always accept while only slightly increasing the
satisfiability of the game.

Note that when we say that the verifier checks the bit $A_{f}$, it
will actually access the bit $A_{\mu\left(f\right)}$ and will invert
it depending on $\left(h_{1},\ldots,h_{u}\right),\left(b_{1},\ldots,b_{u}\right)$.
$\mu\left(f\right)$ only depends on $\left(h_{1},\ldots,h_{u}\right)$.
Close constraint hypergraphs have the same constraints, up to a constant.
So, if two constraint hypergraphs are close, the accessed bit will
be the same and the only difference will be in whether the bit is
negated or not.

The verifier will be modeled as a 3CNF formula, such that all close
constraint hypergraph are transformed into a formula with the same
factor graph. The inner verifier will check one of the following four
constraints:
\begin{enumerate}
\item For $f,g\in\mathcal{F}_{4u}$ chosen uniformly, check that $A_{f}+A_{g}=A_{f+g}$.
This test passes iff the four clauses $\overline{A_{f}}\vee\overline{A_{g}}\vee\overline{A_{f+g}},\overline{A_{f}}\vee A_{g}\vee A_{f+g},A_{f}\vee\overline{A_{g}}\vee A_{f+g},A_{f}\vee A_{g}\vee\overline{A_{f+g}}$
are satisfied.
\item For $f,g,h\in\mathcal{F}_{4u}$ chosen uniformly, if $A_{f}=0$, check
that $A_{fg+h}=A_{h}$. If $A_{f}=1$, check that $A_{fg+g+h}=A_{h}$.
This test passes iff the four clauses $A_{f}\vee\overline{A_{fg+h}}\vee A_{h},A_{f}\vee A_{fg+h}\vee\overline{A_{h}},\overline{A_{f}}\vee\overline{A_{fg+g+h}}\vee A_{h},\overline{A_{f}}\vee A_{fg+g+h}\vee\overline{A_{h}}$
are satisfied.
\item For $f\in\mathcal{F}_{4u},g'\in\mathcal{F}_{u}$ chosen uniformly,
check that $D_{g'}=A_{g+f}+A_{f}$ (where $g$ is $g'$ extended to
the domain of all $3u$ bits of $C$, such that $g$ does not depend
on the bits in $C$ that are not in $B$). This test passes iff the
four clauses $\overline{A_{f}}\vee\overline{A_{g+f}}\vee\overline{D_{g'}},\overline{A_{f}}\vee A_{g+f}\vee D_{g'},A_{f}\vee\overline{A_{g+f}}\vee D_{g'},A_{f}\vee A_{g+f}\vee\overline{D_{g'}}$
are satisfied.
\end{enumerate}
The verifier chooses which of the constraints to check with some probability
that will be implied in the proof.
\begin{definition}
The distance between two strings $x,y\in\left\{ 0,1\right\} ^{\ell}$
is the fraction of coordinates in which they differ.\end{definition}
\begin{claim}
\label{claim:1/2-far}Let $E\in\left\{ 0,1\right\} ^{\mathcal{F}_{n}}$
be a long code of some string $x\in\left\{ 0,1\right\} ^{n}$. Let
$D\in\left\{ 0,1\right\} ^{\mathcal{F}_{n}}$ be some string folded
over $\left(h,b\right)$ such that $h\left(x\right)\neq b$. Then
$D$ and $E$ are $\nicefrac{1}{2}$-far.\end{claim}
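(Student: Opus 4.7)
\bigskip

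\noindent\textbf{Proof proposal for Claim~\ref{claim:1/2-far}.}
The plan is a direct parity argument that pairs up each function $f\in\mathcal{F}_n$ with $f+h$ and shows that in every such pair $D$ and $E$ disagree on exactly one of the two coordinates.

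First I would observe that $h$ must be nonzero: if $h\equiv 0$, then the folding condition $D_{f+h}=D_f+b$ forces $b=0$, but we also have $h(x)=0$, contradicting $h(x)\neq b$. Hence $f\neq f+h$ for every $f$, and the map $f\mapsto f+h$ is a fixed-point-free involution on $\mathcal{F}_n$, so it partitions $\mathcal{F}_n$ into $2^{2^n}/2$ disjoint pairs $\{f,f+h\}$.

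Next, for each such pair I compute the two relevant XORs. By the folding hypothesis, $D_f \oplus D_{f+h}=b$. Since $E$ is the long code of $x$, $E_f\oplus E_{f+h}=f(x)\oplus(f+h)(x)=h(x)$. Adding these in $\mathbb{F}_2$,
\[
(D_f\oplus E_f)\oplus(D_{f+h}\oplus E_{f+h}) \;=\; b\oplus h(x) \;=\; 1,
\]
where the final equality uses $h(x)\neq b$. Thus in every pair, $D$ and $E$ differ on exactly one of the two coordinates $f$, $f+h$.

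Summing over all $2^{2^n}/2$ pairs yields exactly $2^{2^n}/2$ coordinates on which $D$ and $E$ differ, i.e.\ they are at distance exactly $\tfrac{1}{2}$, which is at least $\tfrac{1}{2}$-far as claimed. I do not expect any real obstacle here; the only point that requires a moment of care is ruling out $h\equiv 0$ so that the pairing is well-defined, and this is immediate from the inconsistency $h(x)\neq b$ would create in the folding relation.
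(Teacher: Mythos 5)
Your proof is correct and uses essentially the same idea as the paper: pair each $f$ with $f+h$ and exploit the parity $b\oplus h(x)=1$ to show $D$ and $E$ cannot agree on both coordinates of a pair. The paper phrases it as a pigeonhole contradiction (if agreement exceeded half, some pair would agree on both coordinates), whereas you compute the exact distance $\tfrac12$ and explicitly rule out $h\equiv 0$; both are the same argument.
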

\begin{proof}
$\alpha=\left\{ f|E_{f}=D_{f}\right\} $. If $\left|\alpha\right|>\nicefrac{1}{2}\left|\mathcal{F}_{n}\right|$,
then there is $f$ such that $f,f+h\in\alpha$. $E_{f}=f\left(x\right)$,
$E_{f+h}=f\left(x\right)+h\left(x\right)=f\left(x\right)+b+1$. However
$D_{f+h}=D_{f}+b$, so it is impossible that $f,f+h\in\alpha$.\end{proof}
\begin{definition}
Fix $\delta>0$ arbitrarily small. We say that $A,D$ are \emph{$\left(C,B,\delta\right)$-solid}
if $A,D$ are $\left(\nicefrac{1}{2}-\delta\right)$-close to some
$\tilde{A},\tilde{D},$ respectively, where $\tilde{A}$ is a long
code of an assignment $a$ satisfying the edges in $C$, and $\tilde{D}$
is the long code of an assignment $d$ for $B$, such that $d$ is
consistent with $a$. Usually, $\delta$ will be known from the context
and will be omitted.\end{definition}
\begin{claim}
For a game $G^{u}$, let $E'$ be the set of random coins such that
given the corresponding set $C,B$ to $P_{1},P_{2}$ respectively,
the answers $A,D$ are $\left(C,B\right)$-solid. Then, there are
provers $P_{1},P_{2}$ for the game $G^{u}$ and a set $E$ of random
coins, $\left|E\right|\geq\frac{\delta^{4}}{16}\left|E'\right|$,
such that $V_{u}$ will accept for all choices of random coins in
$E$.\end{claim}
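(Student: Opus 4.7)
The plan is to construct randomized provers $P_1, P_2$ for the game $G^u$ by list-decoding the long codes $A$ and $D$, then derandomize via averaging.

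For each query $C$ to $P_1$, define $L(C) \subseteq \{0,1\}^{4u}$ to be the set of assignments $a$ such that (i)~$a$ satisfies every constraint in $C$, and (ii)~the long code $\tilde{A}$ of $a$ (consistent with the folding) lies at Hamming distance at most $\frac{1}{2} - \delta$ from $A$. Since any two distinct long codes disagree on exactly half their coordinates, a Johnson-type second-moment bound applied to the indicators $\mathbb{1}[f(a) = A_f]$ (for uniform $f$ and $a \in L(C)$) yields $|L(C)| \leq 4/\delta^2$. Define $L'(B) \subseteq \{0,1\}^u$ analogously from $D$, obtaining $|L'(B)| \leq 4/\delta^2$ by the same argument.

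Let the provers be randomized: $P_1(C)$ outputs a uniformly random element of $L(C)$ (or any fixed assignment if $L(C) = \emptyset$), and $P_2(B)$ outputs a uniformly random element of $L'(B)$. For every $(C,B) \in E'$, solidity furnishes a specific pair $(a^*, d^*)$ with $a^*$ satisfying $C$, $d^*$ consistent with $a^*$, and $\tilde{A}, \tilde{D}$ within distance $\frac{1}{2} - \delta$ of $A, D$ respectively. Hence $a^* \in L(C)$ and $d^* \in L'(B)$, and since the two provers use independent random coins,
\[
\Pr[P_1(C) = a^* \text{ and } P_2(B) = d^*] \;\geq\; \frac{1}{|L(C)|\cdot|L'(B)|} \;\geq\; \frac{\delta^4}{16}.
\]
When this event occurs, both verifier checks (that $P_1$'s answer satisfies the selected constraints and that $P_2$'s answer agrees with $P_1$'s on $B$) are satisfied, so $V_u$ accepts.

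By linearity of expectation, the expected number of $(C,B) \in E'$ on which the randomized provers make $V_u$ accept is at least $\tfrac{\delta^4}{16}|E'|$. Fix the provers' random coins to any outcome attaining at least this expectation; this yields deterministic $P_1, P_2$ together with a set $E$ of coins on which $V_u$ accepts deterministically and $|E| \geq \tfrac{\delta^4}{16}|E'|$, as required. The main delicate point is the Johnson-style bound $|L(\cdot)| \leq 4/\delta^2$; it is standard for binary codes of relative minimum distance $\tfrac{1}{2}$, and one must only verify that the folding of $A$ and $D$ (which constrains values across paired columns but is consistent with any genuine long code) does not break the second-moment calculation, which it does not.
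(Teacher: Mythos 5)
Your proposal is correct and takes essentially the same route as the paper: list-decode $A$ and $D$ to lists of size at most $4/\delta^{2}$ each, have the provers pick a uniformly random element of their respective list, and derandomize via linearity of expectation to obtain a fixed strategy winning on at least a $\delta^{4}/16$ fraction of $E'$. The only cosmetic difference is that you build constraint satisfaction into the definition of the list directly, whereas the paper deduces it from Claim~\ref{claim:1/2-far} (folding forces any close long code to satisfy the constraints); the resulting lists coincide.
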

\begin{proof}
Consider a bipartite graph where the vertices of one side are all
possible sets of $u$ constraints queried in $G^{u}$ and the other
side is the set of all possible sets of variables queried. There is
an edge between a set of constraints and a set of variables if there
is a choice of random coins such that the sets are queried at the
same time. Note that the graph produced can be seen as a constraint
graph, representing the game $G^{u}$ with the property that an assignment
satisfying $k$ edges will satisfy $V_{u}$ for $k$ choices of random
coins.

If a string is $\nicefrac{1}{2}-\delta$ close to a long code, then
there are at most $4\delta^{-2}$ long codes close to it (Lemma 3.11
in \cite{BGS98}). From claim \ref{claim:1/2-far}, all of these long
codes must satisfy the selected constraints.

For any edge $\left(C,B\right)$ and respective answers of $\left(P_{1},P_{2}\right)$,
with $\left(A,D\right)$ $\left(C,B\right)$-solid, there are $16\delta^{-4}$
possible choices of assignments for $C$ and $B$ (derived from the
closest $4\delta^{-2}$ for each of them), with at least one choice
satisfying the edge between them. Selecting an assignment randomly
satisfies $16\delta^{-4}\left|E'\right|$ edges in expectation. Thus,
there is an assignment for $G^{u}$ that satisfies at least $\frac{\delta^{4}}{16}\left|E'\right|$
edges.\end{proof}
\begin{corollary}
If $G^{u}$ is at most $\epsilon$ satisfiable, then for any pair
of provers at most $16\delta^{-4}\epsilon$ of the answers can be
solid for their respective queries.
\end{corollary}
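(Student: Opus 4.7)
The plan is to derive the corollary directly from the preceding claim by contrapositive, so the ``proof'' is essentially a one-line deduction plus a careful unpacking of what ``at most $\epsilon$ satisfiable'' means for $G^u$.

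First I would fix arbitrary provers $(P_1, P_2)$ for the PCP system (the ones that supply the long-code-style answers $A$ and $D$) and let $p$ denote the fraction of random coins for which the returned answers $(A, D)$ are $(C, B)$-solid; in the notation of the claim, $p = |E'|/|\text{total coins}|$. My goal is to show $p \le 16\delta^{-4}\epsilon$. The plan is to apply the preceding claim to these fixed provers: it produces a (different) pair of provers $P_1', P_2'$ for the two-prover game $G^u$ together with a set $E$ of random coins on which the outer verifier $V_u$ of $G^u$ always accepts, with $|E| \ge \frac{\delta^4}{16} |E'|$. Hence the provers $P_1', P_2'$ achieve acceptance probability at least $\frac{\delta^4}{16} p$ in $G^u$.

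Next I would invoke the hypothesis that $G^u$ is at most $\epsilon$-satisfiable, which, by the standard equivalence between maximum success probability of provers in a two-prover game and the maximum satisfiable fraction of its constraint-graph formulation (already implicit in the earlier discussion that introduced $G^u$ and explicitly noted in the claim's proof), says that no pair of provers can cause $V_u$ to accept with probability greater than $\epsilon$. Combining this with the lower bound of the previous paragraph gives $\frac{\delta^4}{16} p \le \epsilon$, i.e.\ $p \le 16\delta^{-4}\epsilon$, which is precisely the corollary.

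There is essentially no hard step: the whole content has been packaged into the preceding claim via the ``pick a random element of each close-long-code list'' averaging argument. The only minor point I would want to verify while writing this is the passage from ``number of accepting coins'' to ``satisfied fraction of $G^u$'' — that is, confirming that the normalization in the phrase ``at most $\epsilon$ satisfiable'' matches the normalization used when counting $|E|/|\text{total coins}|$ — but this is routine given how $G^u$ is defined in the bipartite graph at the start of the claim's proof.
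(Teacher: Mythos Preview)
Your proposal is correct and is exactly the intended argument: the paper states the corollary without proof precisely because it is the immediate contrapositive of the preceding claim, and your unpacking of the normalization between accepting coins and satisfiability of $G^u$ is the only detail worth mentioning.
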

Given a query to the provers, let $A,D$ be the answers of $P_{1},P_{2}$,
respectively. Let $x$ be the distance of $A$ from the closest linear
function, $\tilde{A}$.

The fraction of tests (of the first type) that will fail is lower
bounded by the function (\cite{BCHKS95}, also stated as Lemma 3.15
in \cite{BGS98})
\[
\Delta\left(x\right)=\left\{ \begin{array}{cc}
3x-6x^{2} & 0\leq x\leq\frac{5}{16}\\
\frac{45}{128} & \frac{5}{16}<x\leq\frac{45}{128}\\
x & \frac{45}{128}<x\end{array}\right.
\]

If $\tilde{A}$ is not a long code, then the fraction of tests that
the second test will fail is at least $\nicefrac{3}{8}\left(1-2x\right)$
(Lemma 3.19 in \cite{BGS98}).

If $\tilde{A}$ is a long code of $a$, $\tilde{D}$ is a long code
of $d$, where $d$ is the restriction of $a$ on the respective bits,
and $y$ is the distance between $D$ and $\tilde{D}$, then the fraction
of tests that the third test will fail is at least $y\left(1-2x\right)$
(Lemma 3.21 in \cite{BGS98}).
\begin{theorem}
\label{BestGap3SAT}There is a $\left(\frac{77}{80}+\epsilon\right)$-universal
factor graph for 3-SAT, for any $\epsilon>0$.\end{theorem}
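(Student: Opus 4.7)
The plan is to complete the BGS-style reduction, starting from the APX-universal bounded-degree rank-4 $1$-restricted constraint hypergraph obtained by composing the FGPRs of Lemmas~\ref{Formula2Graph} and~\ref{AlphabetReduction} with the amplification of Theorem~\ref{GapAmplification} (skipping Lemma~\ref{Hypergraph2Formula}). Given $\epsilon > 0$, first choose the parallel-repetition parameter $u$ large enough that the outer game $G^u$ has soundness $c_\epsilon^u$ below $(\delta^4/16) \cdot \epsilon'$ for a suitably small $\delta > 0$ and $\epsilon' > 0$. The output 3CNF formula then contains a variable for each entry $A_f$ (indexed by an outer-verifier set $C$ of $u$ constraints and a function $f \in \mathcal{F}_{4u}$, accessed via its folded image $A_{\mu(f)}$) and each entry $D_{g'}$ (indexed by $B$ and $g' \in \mathcal{F}_u$). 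Its clauses are exactly the four-literal clauses emitted by the inner verifier across all random coins, mixed with probabilities $(p_1, p_2, p_3)$ chosen via the BGS optimization.

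For completeness, an assignment satisfying the outer hypergraph lifts to the long codes $A, D$ of the induced assignments to $C$ and $B$, and folding over the linearly independent $(1, h_1, \ldots, h_u)$ with constants $(1, b_1, \ldots, b_u)$ is automatically respected because the encoded assignment satisfies the constraints; all three tests then accept deterministically. For soundness, I would argue by contradiction: suppose the 3CNF is more than $(77/80 + \epsilon)$-satisfiable and fix a corresponding assignment, extracting $A, D$ for every $(C, B)$. The inner tests then fail with total probability less than $3/80 - \epsilon$. Using the three test-failure bounds stated just before the theorem (the $\Delta(x)$ bound for test~1, the $\frac{3}{8}(1 - 2x)$ bound for test~2, and the $y(1 - 2x)$ bound for test~3), the BGS weight-optimization forces $(A, D)$ to be $(C, B, \delta)$-solid on more than an $\epsilon'$-fraction of coin states. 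The corollary on solid answers then yields provers for $G^u$ with acceptance probability exceeding $c_\epsilon^u$, contradicting parallel repetition.

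The main obstacle is verifying that the construction is a genuine FGPR, since standard folding is polarity-sensitive and would break this property. This is where oblivious folding is essential: when a polarity flip in the underlying hypergraph changes some constraint's constant $b_i$ but leaves its homogeneous part $h_i$ unchanged, the function $\mu(f)$ (which depends only on $(h_1, \ldots, h_u)$) is preserved, so the set of $A$-entries accessed by any given test is the same across close instances; only the sign offset $\sum \sigma_i^f b_i$ flips, which manifests purely as a change in literal polarity inside the emitted clauses. Consequently close hypergraphs produce 3CNF formulas whose factor graphs coincide. Uniformity and polynomial size follow because $u$ is a constant depending only on $\epsilon$, making the per-outer-query blowup $2^{O(u)}$ also constant, and an unweighted formula is recovered by duplicating clauses in proportion to $(p_1, p_2, p_3)$.
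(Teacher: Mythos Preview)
Your proposal is correct and follows essentially the same route as the paper: start from the APX-universal bounded-degree rank-$4$ $1$-restricted constraint hypergraph, apply parallel repetition and the three-test long-code inner verifier with oblivious folding, and argue that close instances yield identical factor graphs because $\mu(f)$ depends only on the homogeneous parts $h_i$. The only substantive difference is that the paper carries out the weight optimization explicitly---setting $n_3=\tfrac{3}{4}n_2$ to equalize the two non-solid cases, then solving to get $n_1=n_3=\tfrac{3k}{40}$, $n_2=\tfrac{k}{10}$ and a guaranteed $\tfrac{3}{80}$ fraction of unsatisfied clauses---whereas you defer this step to \cite{BGS98}; two minor slips to fix are that each test emits four \emph{three}-literal clauses (not ``four-literal'' clauses), and the per-query long-code blowup is $2^{2^{O(u)}}$ rather than $2^{O(u)}$, though since $u$ is constant this does not affect your conclusion.
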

\begin{proof}
Let $\mathcal{G}_{n}$ be the set of graphs of size $n$ produced
by the transformation from lemma \ref{Formula2Graph} activated on
APX-universal factor graph generated in appendix \ref{app:APXUFG}.
Note that the degree of every vertex is bounded due to the transformation
used to create the factor graph. The factor graph of the 3CNF formula
checking the satisfiability of $G^{u}$ is the same, for all $G\in\mathcal{G}_{n}$.

Suppose that $G$ cannot be completely satisfied. Then, the fraction
of answers that can be solid to their respective queries is arbitrarily
small (fixing small $\delta$, increasing $u$). Non solid answers
fail some of the tests, so our goal is to increase the fraction of
clauses that non solid answers cannot satisfy. If $A,D$ is the answer
to the query $\left(C,B\right)$ which is not $\left(C,B\right)$-solid
then there are several possibilities:
\begin{itemize}
\item $\tilde{A}$ is not a long code, then at least $n_{1}\Delta\left(x\right)+\frac{3n_{2}}{8}\left(1-2x\right)$
tests fail.
\item $\tilde{A}$ is the long code of $a$, but $D$ is at least $\nicefrac{1}{2}-\delta$
far from a long code of the restriction of $a$, then at least $n_{1}\Delta\left(x\right)+n_{3}\left(\nicefrac{1}{2}-\delta\right)\left(1-2x\right)$
tests fail.
\end{itemize}
Otherwise, $A,D$ is $\left(C,B\right)$-solid.

Let $n_{1},n_{2},n_{3}$ be the number of tests of each type for every
query (theses numbers uniquely define the probability the inner verifier
chooses which of the three tests to execute). The total number of
clauses is $4\left(n_{1}+n_{2}+n_{3}\right)$. If a test fails then
at least a single clause cannot be satisfied (using the respective
assignment). For $n_{3}=\frac{3}{4}n_{2}$ all the fractions of failed
tests (and the number of unsatisfied clauses) is the same for both
cases ($\delta$ can be arbitrarily small). Let $k$ be the total
number of clauses. We need to maximize the fraction of unsatisfied
clauses, that is \[
\frac{n_{1}\Delta\left(x\right)+\frac{3}{56}\left(k-4n_{1}\right)\left(1-2x\right)}{k}\]
Then, the minimum must be at $x=0$, $x=\frac{45}{128}$, or $x=\nicefrac{1}{2}$,
so $n_{1}=n_{3}=\frac{3k}{40}$, $n_{2}=\frac{k}{10}$, which gives
that at least a $\frac{3}{80}$ fraction of the clauses are unsatisfiable.
\end{proof}

\section{A Remark on Nearly Tight Thresholds}
\label{sec:tight}

The following remark relates to the proof of Theorem \ref{thm:tight}.

\begin{remark}
\label{Weighted2Unweighted} If $\phi_3$ is unweighted, then weighted formula $\phi_{k}$ can easily
be replaced by an unweighted formula without affecting the correctness
of the proof. Scale all weights by a multiplicative constant so that
clauses in $\psi_{0}$ have weight 1, and then clauses in $\psi_{i}$
for $i\ge1$ have weight $\gamma m$. We may assume without loss of
generality that $\gamma m$ is an integer. (Otherwise, decrease $\gamma$
by a little.) Duplicating each formula $\psi_{i}$ (for $i\ge1$)
$\gamma m$ times, each time using a fresh set of new three variables
as the $z$ variables, gives the desired unweighted version of the
formula $\phi_{k}$.
\end{remark}

\section{\label{app:MoreCSPs}Universal Factor Graphs for Additional CSPs}

Many known reductions are in fact FGPRs, and this gives universal
factor graphs for additional CSPs. Examples include the standard reductions from max-3SAT to max-4NAE (adding a variable to all clauses), from max-4NAE to max-3NAE (break each clause
into two clauses of two literals and add a new variable to one
clause and its negation to the other) and from max-3NAE to max-2LIN (replace every 3NAE clause by three 2LIN clauses, each of which is the XOR of two literals from the NAE clause).

\end{document}